\g@addto@macro{\UrlBreaks}{\UrlOrds}
\newcolumntype{L}{>{\raggedright\arraybackslash}X}
\DeclareMathAlphabet{\mathcal}{OMS}{cmsy}{m}{n}
\newtheorem{property}{Property}
\newtheorem{claim}{Claim}
\let\cref@old@stepcounter\stepcounter
\def\stepcounter#1{%
	\cref@old@stepcounter{#1}%
	\cref@constructprefix{#1}{\cref@result}%
	\@ifundefined{cref@#1@alias}%
	{\def\@tempa{#1}}%
	{\def\@tempa{\csname cref@#1@alias\endcsname}}%
	\protected@edef\cref@currentlabel{%
		[\@tempa][\arabic{#1}][\cref@result]%
		\csname p@#1\endcsname\csname the#1\endcsname}}
\crefname{algocf}{alg.}{algs.}
\Crefname{algocf}{Alg.}{Alg.}
\crefname{equation}{eq.}{eqs.}
\Crefname{equation}{Eq.}{Eqs.}
\crefname{table}{table}{tables}
\Crefname{table}{Table}{Table}
\crefname{definition}{def.}{defs.}
\Crefname{definition}{Def.}{Defs.}
\let\oldnl\nl
\newcommand{\nonl}{\renewcommand{\nl}{\let\nl\oldnl}}
\newcommand{\eg}{e.g.\xspace}
\newcommand{\Eg}{E.g.\xspace}
\newcommand{\ie}{i.e.\xspace}
\newcommand{\tStart}[1]{\ensuremath{t^{\textit{prop}}_{#1}}\xspace}
\newcommand{\tEnd}[1]{\ensuremath{t^{\textit{end}}_{#1}}\xspace}
\newcommand{\tExecute}[1]{\ensuremath{\Delta^{\textit{exec}}_{#1}}\xspace}
\newcommand{\dissTime}{\ensuremath{\delta}\xspace}
\newcommand{\commentx}[1]{\textcolor{blue}{/* #1 */}}
\newcommand{\curr}{\ensuremath{\textit{curr}}\xspace} 
\newcommand{\attemptedQC}{\ensuremath{\textit{attemptedQC}}\xspace}
\newcommand{\attemptedTC}{\ensuremath{\textit{attemptedTC}}\xspace} 
\newcommand{\outcast}{\ensuremath{\textbf{multicast}}\xspace}
\newcommand{\multicast}{\outcast}
\newcommand{\send}{\ensuremath{\textbf{send}}\xspace} 
\newcommand{\leader}[1]{\ensuremath{\text{Leader}(#1)}\xspace}
\newcommand{\roundDoubling}[1]{\ensuremath{\textit{VD}(#1)}\xspace}
\newcommand{\wishToAdvance}{\ensuremath{\textsf{\smaller[1]wish\_to\_advance}()}\xspace}
\newcommand{\proposeRound}[1]{\ensuremath{{\textsf{\smaller propose\_view}}(#1)}\xspace} 
\newcommand{\proposeView}{\proposeRound}
\newcommand{\newRound}[1]{\ensuremath{\text{``}\textsf{\smaller[1]WISH},#1\text{''}}\xspace} 
\newcommand{\wish}{\newRound}
\newcommand{\vote}[1]{\ensuremath{\text{``} \textsf{\smaller[1]VOTE},#1 \text{''}}\xspace} 
\newcommand{\TC}[1]{%
	\ifthenelse{\isempty{#1}}%
	{\ensuremath{\text{``}\textsf{\smaller[1]TC}\text{''}}\xspace}
	{\ensuremath{\text{``} \textsf{\smaller[1]TC},#1\text{''}}\xspace}
}
\newcommand{\QC}[1]{%
	\ifthenelse{\isempty{#1}}%
	{\ensuremath{\text{``} \textsf{\smaller[1]QC}\text{''}}\xspace}
	{\ensuremath{\text{``} \textsf{\smaller[1]QC},#1\text{''}}\xspace}
}
\newcommand{\honestNodes}{\ensuremath{H}\xspace}
\newcommand{\timeInterval}{\ensuremath{\mathcal{I}}\xspace}
\newcommand{\GST}{\ensuremath{\text{GST}}\xspace}
\newcommand{\initialRound}[1]{\ensuremath{\textit{init}_{#1}}\xspace}
\newcommand{\process}[1]{\ensuremath{\mathcal{P}_{#1}}\xspace}
\newcommand{\rMax}{\ensuremath{v^{\GST}_{\textit{max}}}\xspace}
\newcommand{\sys}{\ensuremath{\text{Cogsworth}}\xspace}
\newcommand{\wishToAdvanceInterval}{\ensuremath{\alpha}\xspace}
\newcommand{\intervalStart}[1]{\ensuremath{#1^{\textit{start}}}\xspace}
\newcommand{\scriptveryshortarrow}[1][3pt]{{%
		\hbox{\rule[\scriptratio\dimexpr\fontdimen22\textfont2-.2pt\relax]
			{\scriptratio\dimexpr#1\relax}{\scriptratio\dimexpr.4pt\relax}}%
		\mkern-4mu\hbox{\let\f@size\sf@size\usefont{U}{lasy}{m}{n}\symbol{41}}}}
\begin{document}

\title{{\sys: Byzantine View Synchronization}}
\author{Oded Naor
\affil{Technion and Calibra}
Mathieu Baudet
\affil{Calibra}
Dahlia Malkhi
\affil{Calibra}
Alexander Spiegelman
\affil{VMware Research}}

\begin{abstract}
Most methods for Byzantine fault tolerance (BFT) in the partial synchrony setting divide the local state of the nodes into views, 
and the transition from one view to the next dictates a leader change.
In order to provide liveness, all honest nodes need to stay in the same
view for a sufficiently long time. This requires \emph{view synchronization}, a
requisite of BFT that we extract and formally define here.

Existing approaches for Byzantine view synchronization incur quadratic communication (in
$n$, the number of parties). A cascade of $O(n)$ view changes may thus result in $O(n^3)$ communication complexity.
This paper presents a new Byzantine view synchronization algorithm named \sys, 
that has optimistically linear communication complexity and constant
latency. Faced with benign failures, \sys has expected linear communication and
constant latency. 

The result here serves as an important step towards reaching
solutions that have overall quadratic communication, the known lower bound on Byzantine fault tolerant consensus.
\sys is particularly useful for a family of BFT protocols that already exhibit
linear communication under various circumstances, but suffer quadratic overhead
due to view synchronization.
\end{abstract}

\maketitle

\sloppy
\section{Introduction}

Logical synchronization is a requisite for progress to be made in asynchronous state machine replication (SMR). 
Previous Byzantine fault tolerant (BFT) synchronization mechanisms incur quadratic message complexities, frequently dominating
over the linear cost of the consensus cores of BFT solutions.
In this work, we define the \emph{view synchronization} problem and
provide the first solution in the Byzantine setting, whose latency is bounded and communication cost is
linear, under a broad set of scenarios.

\subsection{Background and Motivation}
\label{sec:intro:motivation}

Many practical reliable distributed systems do not rely on network synchrony
because networks go through outages and periods of Distributed Denial-of-Service~(DDoS) attacks; and because
synchronous protocols have hard~coded steps that wait for a maximum delay. 
Instead, asynchronous replication solutions via state machine replication (SMR)~\cite{DBLP:journals/csur/Schneider90} 
usually optimize for stability periods. This approach is
modeled as partial synchrony~\cite{dwork1988consensus}. It allows for periods of asynchrony in which
progress might be compromised, but consistency never does.

In the crash-failure model, 
this paradigm underlies most successful industrial solutions, for example,
the Google Chubbie lock service~\cite{DBLP:conf/osdi/Burrows06}, Yahoo's
tcdZooKeeper~\cite{DBLP:conf/usenix/HuntKJR10}, etcd~\cite{etcd}, Google's Spanner~\cite{DBLP:journals/tocs/CorbettDEFFFGGHHHKKLLMMNQRRSSTWW13}, Apache Cassandra~\cite{cassandra2014apache} and others.
The algorithmic cores of these systems, \eg, 
Paxos~\cite{lamport2001paxos}, Viewstamp Replication~\cite{oki1988viewstamped},
or Raft~\cite{ongaro2014raft}, revolve around a view-based paradigm. 
In the Byzantine model, this paradigm underlies many blockchain systems,
including VMware's Concord~\cite{vmware2019concord}, Hyperledger Fabric~\cite{androulaki2018hyperledger},
Cypherium~\cite{guo2017cypherium,Cypherium2019HotstuffBlog}, Celo~\cite{celo}, PaLa~\cite{chan2018pala} and Libra~\cite{libra2019whitepaper}.  The algorithmic cores of these BFT system are 
view-based, e.g., PBFT~\cite{castro1999practical}, SBFT~\cite{gueta2019sbft}, and HotStuff~\cite{yin2019hotstuff}. 

The advantage of the view-based paradigm is that each view has a designated
leader that can drive a decision efficiently. Indeed, in both models, there are
protocols that have per-view linear message and communication complexity, which is optimal.

In order to guarantee progress, nodes must give up when a view does
not reach a decision after a certain timeout period. Mechanisms for changing the view
whose communication is linear exist both for the crash model (all the above) and,
recently, for the Byzantine model (HotStuff~\cite{yin2019hotstuff}). 
An additional requirement for progress is that all nodes overlap for a
sufficiently long period. 
Unfortunately, all of the above protocols incur quadratic message complexity for view synchronization. 

In order to address this, we first define the \emph{view synchronization}
problem independently of any specific protocol and in a fault-model agnostic manner.
We then introduce a view synchronization algorithm called \sys whose
message complexity is linear in expectation, as well as in the worst-case under
a broad set of conditions.

\subsection{The View Synchronization Problem}
\label{sec:intro:viewSyncronization}

We introduce the problem of~\emph{view synchronization}.  All nodes start at view zero. 
A view change occurs as an interplay between the synchronizer, which implements a view
synchronization algorithm and the outer consensus solutions. The consensus
solution must signal that it wishes to end the current view via a
\emph{\wishToAdvance} notification. The synchronizer eventually invokes a consensus
\emph{\proposeView{v}} signal to indicate when a new view~$v$ starts. 
View synchronization requires to
eventually bring all honest nodes to execute the same view for a sufficiently long time, for the outer consensus protocol to be able to drive progress. 

The two measures of interest to us are latency and
communication complexity between these two events. Latency is measured only
during periods of synchrony, when a bound $\dissTime$ on message transmission delays
is known to all nodes, and is expressed in $\dissTime$ units.

View synchronization extends the PaceMaker abstraction presented in~\cite{yin2019hotstuff}, formally defines the problem it solves, and captures it as a separate component.
It is also related to the seminal work of Chandra and Toueg~\cite{chandra1996weakest}, \cite{chandra1996unreliable} 
on \emph{failure detectors}.
Like failure detectors, it is an abstraction capturing the conditions under which progress is guaranteed,
without involving explicit engineering solutions details such as packet transmission delays, timers, and computation.
Specifically, Chandra and Toueg define a leader election abstraction, denoted~$\Omega$, where eventually all non-faulty nodes trust the same non-faulty node as the leader.
$\Omega$ was shown to be the weakest failure detector needed in order to solve consensus.
Whereas Chandra and Toueg's seminal work focuses on the possibility/impossibility of an eventually elected leader, here we
care about how quickly it takes for a good leader to emerge (\ie, the latency), at what
communication cost, and how to do so repeatedly, allowing the extension of one time single-shot consensus to SMR. 

We tackle the view synchronization problem against asynchrony and
the most severe type of faults, 
Byzantine~\cite{lamport1982byzantine,lamport1983weak}. This makes the
synchronizers we develop particularly suited for Byzantine Fault
Tolerance (BFT) consensus systems relevant in today's crypto-economic systems.

More specifically, we assume a system of $n$ nodes that need to form a sequence of
\emph{consensus} decisions that implement SMR. We assume up to
${f < n/3}$ nodes are Byzantine, the upper bound on the number of Byzantine nodes in which Byzantine agreement is
solvable~\cite{fischer1986easy}.  The challenge is that during ``happy'' periods, 
progress might be made among a group of Byzantine nodes cooperating with a
``fast'' sub-group of the honest nodes. Indeed, many solutions advance when a
leader succeeds in proposing a value to a quorum of $2f+1$ nodes, but it is
possible that only the $f+1$ ``fast'' honest nodes learn it and progress to the
next view.  The remaining $f$ ``slow'' honest nodes
might stay behind, and may not even advance views at all. 
Then at some point, the $f$ Byzantine nodes may stop cooperating.
A mechanism is needed to bring the ``slow'' nodes to the same view as the $f+1$
``fast'' ones.

Thus, our formalism and algorithms may be valuable for the consensus protocols
mentioned above, as well as others, such as Casper~\cite{buterin2017casper} and
Tendermint~\cite{buchman2017tendermint,buchman2018tendermint}, which reported 
problems around
liveness~\cite{milosevic2018TendermintLivenessIssue,Pyrofex2018CasperLivenessIssue}.

\subsection{View Synchronization Algorithms}
\label{sec:intro:algsDescription}

We first extract two synchronization mechanisms that borrow from previous BFT
consensus protocols, casting them into our formalism and analyzing them. 

One is a straw-man mechanism that requires no
communication at all and achieves synchronization albeit with unbounded latency.
This synchronizer works simply by doubling the duration of each view.
Eventually, it guarantees a sufficiently long period in which all the nodes are in the same view.

The second is the broadcast-based synchronization mechanism built into PBFT~\cite{castro1999practical} and similar Byzantine protocols, such as~\cite{gueta2019sbft}.
This synchronizer borrows from the Bracha reliable broadcast algorithm~\cite{bracha1987asynchronous}.
Once a node hears of~$f+1$ nodes who wish to enter the same view, it relays the wish reliably so
that all the honest nodes enter the view within a bounded time.

The properties of these synchronizers in terms of latency and communication costs are summarized in~\Cref{tab:syncComparison}.
For brevity, these algorithms and their analysis are deferred to~\Cref{app:algorithms}.

\subsection*{\sys: leader-based synchronizer}

The main contribution of our work is \sys\footnote{\sys is the enchanted clock from Disney's classic ``Beauty and the Beast''.}, which is a leader-based view synchronization algorithm.
\sys utilizes views that have an honest leader to relay messages, instead of broadcasting them.
When a node wishes to advance a view, it sends the message to the leader of the view, and not to all the other nodes.
If the leader is honest, it will gather the messages from the nodes and
multicast them using a threshold signature~\cite{boneh2001short,cachin2005random,shoup2000practical} to the rest of the nodes, incurring only
a linear communication cost.
The protocol implements additional mechanisms to advance views despite faulty leaders.

The latency and communication complexity of this algorithm depend on the number of actual failures and their type.
In the best case, the latency is constant and communication is linear.
Faced with $t$ benign failures, in expectation, the communication is linear and
in worst case $O(t {\cdot} n)$, as mandated by the lower bound of Dolev and
Reischuk~\cite{dolev1985bounds};
the latency is expected constant and $O(t {\cdot} \dissTime)$ in the worst-case.
Byzantine failures do not change the latency, but they can drive the communication to an expected $O(n^2)$ complexity and in the worst-case up to $O(t{\cdot}n^2)$.
It remains open whether a worst-case linear synchronizer whose latency is constant is
possible.

To summarize, \sys performs just as well as a broadcast-based synchronizer in terms of latency and message complexity, and in certain scenarios shows up-to $O(n)$ better results in terms of message complexity.
\Cref{tab:syncComparison} summarizes the properties of all three synchronizers.
\begin{table}%
	\centering
	\footnotesize
	\tbl{Comparison of the different protocols for view synchronization\label{tab:syncComparison}}{%
	\begin{tabular}{| c | c c l c l l |}
		\hline

		                                  & \multirow{3}{*}{\textbf{View doubling}} & \multicolumn{2}{c}{\multirow{3}{*}{\textbf{broadcast-based}}}    & \multicolumn{3}{c|}{\textbf{\sys:}}                                       \\
		                                  &                                          &           &                                              & \multicolumn{3}{c|}{\textbf{leader-based}}                                       \\
		                                  &                                          &            &                                             & Fault type                                 &            &                        \\
		\hline
		\multirow{5}{*}{\shortstack{\textbf{Communication} \\ \textbf{complexity}}} & \multirow{5}{*}{0} & \multirow{5}{*}{\shortstack{ expected \\ worst-case}} & \multirow{5}{*}{\shortstack{$O(n^2)$ \qquad \\  $O(t{\cdot} n^2)$}} & \multirow{3}{*}{Byzantine} & optimal & $O(n)$ \\
		                                  &                                          &                                                         &                                 &           & expected   & $O(n^2)$               \\ 
		                                  &&&&& worst-case & $O(t{\cdot}n^2)$
		                                  \\ \cline{5-7} 
		                                  &                                          &            &                                             & \multirow{2}{*}{Benign}                    & expected   & $O(n)$                 \\
		                                  &                                          &                                                         &                             &               & worst-case & $O(t {\cdot} n)$               \\
		\hline
		\multirow{2}{*}{\textbf{Latency}} & \multirow{2}{*}{unbounded}               & expected & $O(\dissTime)$                         & \multirow{2}{*}{\shortstack{Byzantine                                            \\ +  Benign}} & expected & $O(\dissTime)$ \\
		                                  &                                          &      worst-cast & $O(t{\cdot}\dissTime)$                                                   &                                            & worst-case & $O(t {\cdot} \dissTime)$ \\
		\hline
	\end{tabular}}
\Note{Note:}{$t$ is the number of actual failures, and \dissTime is the bound on message delivery after \GST.}
\end{table}%
 
\subsection{Contributions}
The contributions of this paper as follows: 

\begin{itemize}
	\item To the best of our knowledge, this is the first paper to formally
define the problem of view synchronization.
	
	\item It includes two natural synchronizers algorithms cast into this framework and uses them as a basis for comparison.

	\item It introduces \sys, a leader-based Byzantine synchronizer exhibiting
	      faultless and expected linear communication complexity and constant latency.
\end{itemize}

\paragraph*{Structure}
The rest of this paper is structured as follows: \Cref{sec:model} discusses the model; \Cref{sec:problemDef} formally presents the view synchronization problem; \Cref{sec:cogsworth} presents the \sys view synchronization algorithm with formal correctness proof latency and communication cost analysis; \Cref{sec:usages} describes real-world implementations where the view synchronization algorithms can be integrated; \Cref{sec:relatedWork} presents related work; and~\Cref{sec:conclusion} concludes the paper.
The description of the two natural view synchronization algorithms, view doubling and broadcast-based, are presented in~\Cref{app:algorithms}.

\section{Model}
\label{sec:model}

We follow the eventual synchronous model~\cite{dwork1988consensus} in which the execution is divided into two durations; first, an unbounded period of asynchrony, where messages do not have a bounded time until delivered; and then, a period of synchrony, where messages are delivered within a bounded time, denoted as~\dissTime.
The switch between the first and second periods occurs at a moment named \emph{Global Stabilization Time~(\GST)}.
We assume all messages sent before GST arrive at or before ${\GST + \dissTime}$ .

Our model consists of a set ${\Pi = \left\lbrace \process{i} \right\rbrace_{i=1}^n}$ of $n$ nodes, and a known mapping, denoted by $\leader{\cdot}$:~${\mathbb{N} \mapsto \Pi}$ that continuously rotates among the nodes. Formally, ${\forall j \geq 0 \colon \bigcup_{i=j}^{\infty} \leader{i} = \Pi}$.
We use a cryptographic signing scheme, a public key infrastructure (PKI) to validate signatures, as well as a threshold signing scheme~\cite{boneh2001short,cachin2005random,shoup2000practical}.
The threshold signing scheme is used in order to create a compact signature of $k$-of-$n$ nodes and is used in other consensus protocols such as~\cite{cachin2005random}.
Usually $k = f+1$ or $k=2f+1$.

We assume a non-adaptive adversary who can corrupt up to~$f < n/3 $ nodes at the beginning of the execution.
This corruption is done without the knowledge of the mapping \leader{\cdot}.
The set of remaining $n-f$ honest nodes is denoted~\honestNodes.
We assume the honest nodes may start their local execution at different times.

In addition, as in~\cite{abraham2019VABA,cachin2005random}, we assume the adversary is polynomial-time bounded, \ie, the probability it will break the cryptographic assumptions in this paper (\eg, the cryptographic signatures, threshold signatures, etc.) is negligible.
\section{Problem Definition}
\label{sec:problemDef}

We define a \emph{synchronizer}, which solves the view synchronization problem, to be a long-lived task
with an API that includes a \emph{\wishToAdvance} operation and
a \emph{\proposeView{v}} signal, where $v \in \mathbb{N}$.
Nodes may repeatedly invoke \emph{\wishToAdvance}, and in return
get a possibly infinite sequence of \emph{\proposeView{\cdot}} signals.
Informally, the synchronizer should be used by a high-level
abstraction (e.g., BFT state-machine replication protocol) to synchronize view numbers in the following
way: All nodes start in view~$0$, and whenever they wish to move to the
next view they invoke \wishToAdvance.
However, they move to view $v$ only when they get a
\emph{\proposeRound{v}} signal.

Formally, a \emph{time interval}~\timeInterval consists of a
starting time~$t_1$ and an ending time~$t_2 \ge t_1$ and all the time
points between them. \timeInterval's length is~$\left| \timeInterval
\right|=t_2-t_1$.
We say $\timeInterval' \subseteq \timeInterval''$ if
$\timeInterval'$ begins after or when $\timeInterval''$ begins, and
ends before or when $\timeInterval''$ ends.
We denote by \tStart{\process{}, v} the time when
node $\process{}$ gets the signal \proposeView{v}, and assume that
all nodes get \proposeView{0} at the beginning of their execution.
We denote time~$t=0$ as the time when the last honest node began its execution, formally $\max_{\process{} \in \honestNodes} \tStart{\process{},0}=0$.
We further denote \tExecute{\process{},v} as the time
interval in which node~$\process{}$ is in view~$v$, \ie, \tExecute{\process{},v}
begins at \tStart{\process{},v} and ends at $\tEnd{\process{},v}
\triangleq \min_{v' > v}\left\lbrace \tStart{\process{},v'}
\right\rbrace$.
We say node~$\process{}$ is at view~$v$ at time~$t$, or
\emph{executes view~$v$ at time~$t$}, if $t \in
\tExecute{\process{},v}$.

We are now ready to define the two properties that any synchronizer must achieve.
The first property, named \emph{view synchronization} ensures that there is an infinite number of views with an honest leader that all the correct nodes execute for a sufficiently long time:

\newcommand{\alphaSynchronizer}{\wishToAdvanceInterval-view synchronizer\xspace}

\begin{property} [View synchronization] \label{prop:roundSync} For
every $c \ge 0$ there exists $\wishToAdvanceInterval  > 0$ and an infinite number
of time intervals and views $\left\lbrace \timeInterval_k, v_k
\right\rbrace_{k=1}^{\infty}$, such that if the interval between every two consecutive calls to
\wishToAdvance by an honest node is 
$\wishToAdvanceInterval$, then for any $k \ge 1$ and any $\process{} \in \honestNodes$ the following holds:
\begin{enumerate}
\item $\left| \timeInterval_k \right| \ge c$
\item $\timeInterval_k \subseteq \tExecute{\process{},v_k}$
\item $\leader{v_k} \in \honestNodes $
\end{enumerate}
\end{property}

The second property ensures that a synchronizer will only signal a new view if an honest node wished to advance to it. Formally:

\begin{property} [Synchronization validity] \label{prop:syncValidity}
	The synchronizer signals
	\proposeView{v'} only if there exists an honest node~$\process{}
	\in \honestNodes$ and some view $v$ s.t.\ $\process{}$ calls
	\wishToAdvance at least $v' - v$ times while executing view $v$.
\end{property}

\paragraph*{Discussion}
The parameter \wishToAdvanceInterval, which is used in~\Cref{prop:roundSync} is the time an honest node waits between two successive invocations of \wishToAdvance, and may differ between view synchronization algorithms.
This parameter is needed to make sure that \wishToAdvance is called an infinite number of times in an infinite run.
In reality, it is likely that in most view synchronization algorithms \wishToAdvanceInterval is larger than some value~$d$ which is a function of the message delivery bound~\dissTime, and also of~$c$ from~\Cref{prop:roundSync}, \ie, the synchronization algorithm will work for any $\wishToAdvanceInterval \geq d \left( \dissTime, c \right)$.
In this case, a consensus protocol using the synchronizer can execute the same view as long as progress is made, and trigger a new view synchronization in case liveness is lost.
See~\Cref{sec:algs:discussion} for concrete examples.

The requirement that the leader of all the synchronized views is honest is needed to ensure that once a view is synchronized, the leader of that view will drive progress in the upper layer protocol, thus ensuring liveness.
Without this condition, a synchronizer might only synchronize views with faulty leaders.

Synchronization validity (\Cref{prop:syncValidity}) ensures that the synchronizer does not suggest a new view to the upper-layer protocol unless an honest node running that upper-layer protocol wanted to advance to that view.

\paragraph*{Latency and communication complexity}
In order to define how the latency and message communication complexity
are calculated, we first define \intervalStart{\timeInterval_k} to be
the time at which the \emph{$k$-th view synchronization is reached}.
Formally, \intervalStart{\timeInterval_k} $\triangleq
\max_{\process{} \in H} \left\lbrace
\tStart{\process{},v_k}\right\rbrace$, where $v_k$ is defined
according to Property 1. 

With this we can define the latency of a synchronizer implementation: 

\begin{definition} [Synchronizer latency] \label{def:laqtency}
	The latency of a synchronizer is defined as $\lim_{n \to \infty} \left( \left( \intervalStart{\timeInterval_1} - \GST \right) +   \sum_{k=2}^{n} \intervalStart{\timeInterval_{k}} - \intervalStart{\timeInterval_{k-1}} \right) / n$.
\end{definition}

Next, in order to define communication complexity, we first need to introduce a few more notations.
Let~$M_{\process{},v_1 \to v_2}$ be the total number of messages \process{} sent between \tStart{\process{}, v_1} and \tStart{\process{},v_2}.
In addition, denote $M_{\process{},\to v}$ as the total number of messages sent by~\process{} from the beginning of \process{}'s execution and~\tStart{\process{},v}.

With this, we define the communication complexity of a synchronizer implementation:
\begin{definition} [Synchronizer communication complexity] \label{def:communicationCost}
	Denote $v_k$ the $k$-th view in which view synchronization
	occurs~(Property 1).
	The message communication cost of a synchronizer is defined as $\lim_{n \to \infty} \left( \sum_{\process{} \in \honestNodes} M_{\process{},\to v_1} + \sum_{k=2}^{n} \sum_{\process{} \in \honestNodes} M_{\process{},v_{k-1} \to v_{k}} \right) / n$.
\end{definition}

This concludes the formal definition of the view synchronization problem.
Next, we present \sys, a view synchronization algorithm with expected constant latency and linear communication complexity in a variety of scenarios.

\section{\sys: Leader-Based Synchronizer}
\begin{algorithm*}[t] \footnotesize
	\caption{\sys: Leader-based synchronizer for node \process{}}
	\label{alg:relibra}
	\SetAlgoNoEnd
	\DontPrintSemicolon
	\SetInd{0.4em}{0.4em}

	\KwInitialize(:\label{alg:relibra:initialize} )  {
		$\curr \gets 0$ \;
		$\attemptedTC \gets 0$ \;
		$\attemptedQC \gets 0$ \;
	}
	
	\BlankLine
	\BlankLine
			
	\nonl \textbf{Every node}: \;
	\KwOn({\wishToAdvance: } \label{alg:relibra:wishToAdvance}) {
		\send \newRound{\curr + 1} to \leader{\curr + 1}
	}
	\BlankLine
	\KwUpon({receiving a valid \TC{v} from \leader{r} s.t. $v \le r \le v+f+1$:	}  \label{alg:relibra:TC}) 
	{
		\send \TC{v} to \leader{v} \quad \commentx{Ensures that if \leader{v} is honest then latency is linear} \label{alg:relibra:forwardTCtoK}\;
		\send \vote{v} to \leader{r}
	}

	\KwAfter({$2\dissTime$ from last sending \newRound{v} \KwAnd not receiving \TC{v} \KwAnd $\attemptedTC \le \curr + f+1$:} \label{alg:relibra:attemptedTC})
	{
		\send \newRound{v} to \leader{\attemptedTC} \;
		$\attemptedTC \gets \attemptedTC + 1$ \;
	}
	\BlankLine
	\KwUpon({receiving a valid \QC{v} from \leader{r} s.t. $v \le r \le v+f+1$:} \label{alg:relibra:QC})
	{
		$\attemptedTC \gets v$ \;
		$\attemptedQC \gets v$ \;
		$\curr \gets v$ \;
		\proposeRound{v} \; \label{alg:relibra:advanceToRound}
	}
	
	\KwAfter({$2\dissTime$ from last sending \vote{v} \KwAnd not receiving \QC{v} \KwAnd $\attemptedQC \le \curr+f+1$:} \label{alg:relibra:attemptedQC})
	{
		\nonl \commentx{Also need to send \TC{v} so that the next leader can multicast to the rest} \;
		\send \vote{v} \KwAnd \TC{v} to \leader{\attemptedQC} \label{alg:relibra:forwardTC}\;
		$\attemptedQC \gets \attemptedQC + 1$ \;
	}

	\BlankLine
	\BlankLine

	\nonl \textbf{Leader (\leader{r} = \process{}):}\;
	\KwUpon({receiving $f + 1$ $\newRound{v}$ \KwOr \TC{v} s.t. $r-(f+1) \le v \le r$ and not sending \TC{v} before:} \label{alg:relibra:leaderReceiveTC}) {
		\multicast $\TC{v}$ with a threshold signature to all nodes (including self) 
	}

	\KwUpon({receiving $2f + 1$ $\vote{v}$ messages s.t. $r-(f+1) \le v \le r$ and not sending \QC{v} before:} \label{alg:relibra:leaderReceiveQC}) {
		\multicast $\QC{v}$ with a threshold signature to all nodes (including self) \label{alg:relibra:leaderMulticastQC}
	}
\end{algorithm*}
\label{sec:algs:sys}
\label{sec:cogsworth}

Before presenting \sys, it is worth mentioning that we assume that all messages between nodes are signed and verified; for brevity, we
omit the details about the cryptographic signatures. In the algorithm, when a
node collects messages from $x$ senders, it is implied that these messages carry $x$ distinct
signatures.
We also assume that the \leader{\cdot} mapping is based on a permutation of the nodes such that every consecutive $f+1$ views have at least one honest leader, \eg, ${\leader{v} = \left( v \bmod n \right) + 1}$.
The algorithm can be easily altered to a scenario where this is not the case.

\subsection{Overview} \sys is
a new approach to view synchronization that leverages leaders to optimistically
achieve linear communication.
The key idea is that instead of nodes broadcasting synchronization messages all-to-all and
incurring quadratic communication, nodes send messages to the leader of the view
they wish to enter. If the leader is honest, it will relay a single broadcast
containing an aggregate of all the messages it received, thus incurring only linear communication.

If a leader of a view~$v$ is Byzantine, it might not help as a relay. In this
case, the
nodes time out and then try to enlist the leaders of subsequent views, one by one, up to view~$v+f+1$,
to help with relaying.
Since at least one of those leaders is honest, one of them will
successfully relay the aggregate. 

The full protocol is presented in~\Cref{alg:relibra}, and is consisted of several message types.
The first two are sent from a node to a leader. They are used to signal to the leader that the node is ready to advance to the next stage in the protocol.
Those messages are named~\wish{v} and \vote{v} where~$v$ is the view the message refers to.

The other two message types are ones that are sent from leaders to nodes.
The first is called~\TC{v} (short for ``Time Certificate'') and is sent when the leader receives~$f+1$ \wish{v} messages; and the second is called~\QC{v} (short for ``Quarum Certificate'') and is sent when the leader receives~$2f+1$ \vote{v} messages.
In both cases, a leader aggregates the messages it receives 
using threshold signatures such that each broadcast message from the leader
contains only one signature. 

The general flow of the protocol is as follows: When~\wishToAdvance is invoked,
the node sends~\wish{v} to~\leader{v}, where~$v$ is the view succeeding
$\curr$ (\Cref{alg:relibra:wishToAdvance}).
Next, there are two options: (i)~If \leader{v} 
forms a~\TC{v}, it broadcast it to all
nodes~(\Cref{alg:relibra:leaderReceiveTC}).
The nodes then respond with~\vote{v} message to the
leader~(\Cref{alg:relibra:TC}) (ii)~Otherwise, if $2\dissTime$
time elapses after sending~\wish{v} to~\leader{v} without receiving~\TC{v}, a node gives up and sends~\wish{v} to the next leader, \ie,
\leader{v+1}~(\Cref{alg:relibra:attemptedTC}). It then waits again~$2\dissTime$
before forwarding~\wish{v} to~\leader{v+2}, and so on, until~\TC{v} is received.

Whenever~\TC{v} has been received, a node sends \vote{v} (even if it did not
send~\wish{v}) to \leader{v}. Additionally, as above, it enlists leaders one by
one until \QC{v} is obtained. Here, the node sends leaders \TC{v} as well as
\vote{v}. 
When a node finally receives~\QC{v} from a leader, it enters view~$v$ immediately~(\Cref{alg:relibra:QC}).

\paragraph{Correctness}

We will prove that~\sys achieves eventual view synchronization~(\Cref{prop:roundSync}) for any $\wishToAdvanceInterval \ge 4\dissTime$ as well as synchronization validity~(\Cref{prop:syncValidity}).
Thus, the claims and lemmas bellow assume this.

We start by proving that if an honest node entered a new view, and the leader of that view is honest, then all the other honest nodes will also enter that view within a bounded time.

\begin{claim}
	\label{claim:relibra:honestLeaderBound}
	After \GST, if an honest node enters view~$v$ at time~$t$, and the
leader of view~$v$ is honest then all the honest nodes enter view~$v$ by $t+4\dissTime$, \ie, if $\leader{v} \in \honestNodes$ then ${\max_{\process{i} \in \honestNodes} \left\lbrace \tStart{\process{i},v}\right\rbrace - \min_{\process{j} \in \honestNodes} \left\lbrace \tStart{\process{j},v} \right\rbrace \le 4\dissTime}$.
\end{claim}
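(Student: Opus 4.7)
The idea is to funnel a short chain of multicasts through the honest \leader{v}, leveraging the forwarding rule on~\Cref{alg:relibra:forwardTCtoK}: whenever an honest node is about to cast \vote{v}, it simultaneously forwards the supporting \TC{v} to \leader{v}. Because the initial \QC{v} received by~$p$ aggregates $2f+1$ \vote{v}s, Byzantine quorum intersection gives at least $f+1$ honest signatories, all of which will have already notified \leader{v}. From that point the honest \leader{v} deterministically drives the two remaining multicast hops, so the latency budget is a four-hop chain: forwarded \TC{v} $\to$ multicast \TC{v} $\to$ returned \vote{v} $\to$ multicast \QC{v}, each costing at most \dissTime.

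Concretely, I would argue in four short steps.
First, fix an honest $p$ entering view~$v$ at time~$t$; by~\Cref{alg:relibra:advanceToRound}, $p$ received a valid \QC{v}, and by quorum intersection at least $f+1$ of its \vote{v}-signatories are honest---call this set $H_1$.
Second, each $h \in H_1$, by~\Cref{alg:relibra:TC}, has forwarded a \TC{v} to \leader{v} at some time $s_h \le t$; using the GST delivery bound that a message sent at time~$s$ is delivered by $\max(s,\GST)+\dissTime$, these \TC{v}s reach \leader{v} by $t+\dissTime$.
Third, being honest, \leader{v} meets the precondition of~\Cref{alg:relibra:leaderReceiveTC} (the range check $v-(f+1)\le v\le v$ is trivial) and multicasts \TC{v} by $t+\dissTime$; every honest node receives it by $t+2\dissTime$ and replies with \vote{v} to \leader{v} per~\Cref{alg:relibra:TC}.
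Fourth, \leader{v} gathers at least $2f+1$ honest \vote{v}s by $t+3\dissTime$, multicasts \QC{v} per~\Cref{alg:relibra:leaderMulticastQC}, and every honest node executes \proposeRound{v} at~\Cref{alg:relibra:advanceToRound} by $t+4\dissTime$.

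The main obstacle is bookkeeping around \GST: a signatory $h \in H_1$ may have sent its \vote{v}---and the accompanying forwarded \TC{v}---before~\GST, even though $p$ received the resulting \QC{v} after~\GST, so one cannot naively write $s_h + \dissTime \le t+\dissTime$. The clean fix is the uniform delivery bound $\max(s_h,\GST)+\dissTime \le t+\dissTime$, which holds because $s_h \le t$ and $t \ge \GST$. A secondary subtlety is that some honest node $q$ may already be in a view larger than~$v$ at time~$t$; in that case $q$'s own entry time into view~$v$ is necessarily at most~$t$, and applying the four-hop argument to the earliest honest entry into view~$v$ subsumes the general statement, so the $\max - \min \le 4\dissTime$ spread in the formal conclusion of the claim is preserved.
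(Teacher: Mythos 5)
Your proof is correct and takes essentially the same approach as the paper's: both rely on the observation that each honest $\vote{v}$-signatory of the $\QC{v}$ must have earlier forwarded $\TC{v}$ to $\leader{v}$ via \Cref{alg:relibra:forwardTCtoK}, which lets the honest $\leader{v}$ drive a $\TC{v}\to\vote{v}\to\QC{v}$ relay completing within $4\dissTime$. The only minor difference is that the paper splits into the cases $r=v$ and $r>v$, handling the former directly via $\leader{v}$'s own $\QC{v}$ multicast, whereas you fold both into the single four-hop chain and are somewhat more explicit about the $\max(s_h,\GST)+\dissTime$ delivery bookkeeping and about the earliest entrant into view~$v$.
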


\begin{proof}
	Let~$\process{i}$ be the first honest node that entered view~$v$ at time~$t$.
	\process{i} entered view~$v$ since it received~\QC{v} from~\leader{r} such that~$v \le r \le v+f+1$~(\Cref{alg:relibra:QC}).

	If~$r=v$ then we are done, since when \leader{v} sent~\QC{v} it also sent it to all the other honest nodes~(\Cref{alg:relibra:leaderMulticastQC}), which will be received by $t + \dissTime$, and all the honest nodes will enter view~$v$.

	Next, if~$r > k$ then the only way for~\leader{v} to send~\QC{v} is if it gathered~$2f+1$ \vote{v} messages~(\Cref{alg:relibra:leaderReceiveQC}), meaning at least~$f+1$ of the \vote{v} messages were sent by honest nodes.
	An honest node will send a~\vote{v} message only after first receiving~\TC{v} from \leader{r'} s.t.~$v \le r' \le v+f+1$~(\Cref{alg:relibra:TC}).

	Since when receiving a~\TC{v} an honest node sends the \TC{v} to \leader{v}(\Cref{alg:relibra:forwardTCtoK}), then \leader{v} will receive~\TC{v} by~$t+\dissTime$, will forward it to all other nodes by~$t+2\dissTime$, who will send~\vote{v} to~\leader{v} by~$t+3\dissTime$ and by~$t+4\dissTime$ all honest nodes will receive~\QC{v} from~\leader{v} and enter view~$v$.
\end{proof}

Next, assuming an honest node entered a new view, we bound the time it takes to at least~$f+1$ honest nodes to enter the same view.
Note that this time we do not assume anything on the leader of the new view, and it might be Byzantine. 

\begin{claim}
	\label{claim:relibra:boundByzLeader}
	After~\GST, when an honest node enters view~$v$ at time~$t$, at least~$f+1$ honest nodes enter view~$v$ by~$t+2\dissTime (f+2)$, \ie, after \GST for every~$v$ there exists a group S of honest nodes s.t. $\left| S \right| \ge f+1$ and $\max_{\process{i} \in S} \left\lbrace \tStart{\process{i},v}\right\rbrace - \min_{\process{j} \in S} \left\lbrace \tStart{\process{j},v} \right\rbrace \le 2\dissTime (f+2)$.
\end{claim}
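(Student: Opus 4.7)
The plan is to trace how the $\QC{v}$ delivered to $p_i$ propagates to other honest nodes. Since $p_i$ entered view $v$ at time $t$, by line~\ref{alg:relibra:QC} it received a valid $\QC{v}$ from some leader $\leader{r}$ with $v \le r \le v+f+1$. I would split the analysis based on whether $\leader{r}$ is honest.

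In the easy case $\leader{r}$ is honest. Then by line~\ref{alg:relibra:leaderMulticastQC} it multicasts $\QC{v}$ to every node, so all honest nodes receive $\QC{v}$ by $t + \dissTime$ and enter view $v$, which is well within the $2\dissTime(f+2)$ bound.

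The main case is when $\leader{r}$ is Byzantine. The $\QC{v}$ aggregates $2f+1$ distinct $\vote{v}$ signatures, so a set $H_1 \subseteq \honestNodes$ of at least $f+1$ honest nodes sent $\vote{v}$; since each such vote reached $\leader{r}$ before $\leader{r}$ forwarded the $\QC{v}$ to $p_i$, it was dispatched by roughly time $t$. These voters may never have received $\QC{v}$ back from the Byzantine $\leader{r}$, so their $2\dissTime$ timers (line~\ref{alg:relibra:attemptedQC}) eventually expire and they forward $\vote{v}$ together with $\TC{v}$ to the successive leaders $\leader{v},\leader{v+1},\ldots,\leader{v+f+1}$ at $2\dissTime$ intervals. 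By the rotation assumption at least one $\leader{v^*}$ in this range is honest; once the forwarded $\TC{v}$ reaches $\leader{v^*}$, it multicasts $\TC{v}$ (line~\ref{alg:relibra:leaderReceiveTC}), every honest node responds with $\vote{v}$ (line~\ref{alg:relibra:TC}), $\leader{v^*}$ collects $2f+1$ such votes and multicasts a fresh $\QC{v}$ (line~\ref{alg:relibra:leaderMulticastQC}), and then all honest nodes---in particular the $f+1$ required---enter view $v$.

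The hard part is the timing bookkeeping. Combining the initial $2\dissTime$ timeout before forwarding begins, at most $f+1$ successive $2\dissTime$-spaced forwarding attempts until $\leader{v^*}$ is reached, and the constant-hop $\TC{v}/\vote{v}/\QC{v}$ exchange that $\leader{v^*}$ then drives, I would need to show the total is at most $2\dissTime(f+2)$ past $t$. The delicate point is that the honest votes reached the Byzantine $\leader{r}$ strictly before $p_i$ received the $\QC{v}$, which gives a $\dissTime$ head start; this slack has to be exploited carefully to absorb the final $\TC{v}/\vote{v}/\QC{v}$ round trip at $\leader{v^*}$ and fit the whole chain inside $2\dissTime(f+2)$, even though the $H_1$ voters may have cast their votes at different times and their $\attemptedQC$ counters may not be perfectly aligned.
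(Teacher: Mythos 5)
Your decomposition mirrors the paper's proof exactly: split on whether $\leader{r}$ is honest, handle the honest case with a single multicast of $\QC{v}$, and in the Byzantine case extract the $f+1$ honest voters from the $2f+1$ signatures and rely on the $2\dissTime$-spaced escalation of \vote{v} and \TC{v} to successive leaders until an honest one picks it up and drives the full $\TC{v}/\vote{v}/\QC{v}$ cycle. That is precisely the argument the paper makes.

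Where you come up short is in the timing accounting, which you rightly flag as the delicate part but then propose to close in a way that will not work. The ``$\dissTime$ head start'' you hope to exploit---that the honest votes reached the Byzantine $\leader{r}$ strictly before $p_i$ got $\QC{v}$---gives no usable slack, because after \GST messages may be delivered arbitrarily faster than $\dissTime$; you only know the votes were sent by time $t$, not $\dissTime$ earlier. With $f+1$ escalations at $2\dissTime$ intervals plus one hop to the leader plus a $4\dissTime$ final round trip, a straightforward sum lands you at roughly $t + 2\dissTime f + 7\dissTime$, which overshoots $2\dissTime(f+2)$. The paper instead closes the gap with a different observation: the rotation assumption guarantees an honest leader among $\leader{v},\ldots,\leader{v+f}$, so if $\leader{v}$ is Byzantine an honest leader appears no later than $\leader{v+f}$, requiring only $f$ escalations (reached by $t + 2\dissTime f + \dissTime$); and if $\leader{v}$ is honest it already received a forwarded $\TC{v}$ via line~\ref{alg:relibra:forwardTCtoK} and finishes even faster. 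Adding the $4\dissTime$ for the honest leader to multicast $\TC{v}$, collect votes, and multicast $\QC{v}$ then gives exactly $t + 2\dissTime(f+2)$. So the structure of your argument is right, but you would need to replace the ``head start'' idea with the tighter count of escalation steps to actually hit the stated bound.
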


\begin{proof}
	Let~$\process{i}$ be the first node that entered view~$v$ at time~$t$.
	\process{i} entered $v$ since it received \QC{v} from \leader{r} and $v \le r \le v+f+1$~(\Cref{alg:relibra:QC}).
	If~\leader{r} is honest then we are done, since \leader{r} multicasted \QC{v} to all honest nodes~(\Cref{alg:relibra:leaderMulticastQC}), and within \dissTime all honest nodes will also enter view~$v$ by~$t+\dissTime$.

	Next, if \leader{r} is Byzantine, then it might have sent \QC{v} to a subset of the honest nodes, potentially only to~$\process{i}$.
	In order to form a \QC{v}, \leader{r} had to receive $2f+1$ \vote{v} messages~(\Cref{alg:relibra:leaderReceiveQC}), meaning that at least~$f+1$ honest nodes sent \vote{v} to \leader{r}.
	Denote~$S$ as the group of those~$f+1$ honest nodes.

	Each node in~$S$ sent \vote{v} message since it received \TC{v} from \leader{r'} for ${v \le r' \le v+f+1}$~(\Cref{alg:relibra:TC}).
	Note that different nodes in~$S$ might have received~\TC{v} from a different leader, \ie, $\leader{r'}$ might not be the same leader for each node in~$S$.

	After a node in~$S$ sent \vote{v} it will either receive a~\QC{v} within~$2\dissTime$ and enter view~$v$, or timeout after~$2\dissTime$ and send \vote{v} with \TC{v} to \leader{v+1}~(\Cref{alg:relibra:forwardTC}).
	They will continue to do so when not receiving~\QC{v} for the next~$f+1$ views after~$v$.
	This ensures that at least one honest leader will receive \TC{v} after at most~$t+2\dissTime f + \dissTime$.
	Then, this honest leader will multicast the~\TC{v} it received~(\Cref{alg:relibra:leaderReceiveTC}) and at most by $t+2\dissTime (f+1)$, all the honest nodes will receive~\TC{v}. 
	The honest nodes will then send \vote{v} to the honest leader, which will be able to create \QC{v} and multicast it.
	The~\QC{v} will thus be received by all the honest nodes by~$t+2\dissTime(f+2)$ and we are done.
\end{proof}

Next, we show that during the execution, an honest node will enter some new view.

\begin{claim}
	\label{claim:relibra:liveness}
	After \GST, some honest node~$\process{i}$ enters a new view.
\end{claim}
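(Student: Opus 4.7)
The plan is to argue by contradiction: suppose that after \GST no honest node ever enters a new view. Then every honest node's \curr stabilizes to some fixed value; let $V^*$ denote the maximum stabilized \curr over \honestNodes and let $S^* \subseteq \honestNodes$ denote the set of honest nodes whose stabilized \curr equals $V^*$. Since $\wishToAdvanceInterval \ge 4\dissTime$, every $\process{} \in S^*$ invokes \wishToAdvance repeatedly after \GST and therefore sends \wish{V^*+1} to \leader{V^*+1}; because no \QC{V^*+1} ever arrives under our assumption, the $2\dissTime$ timeouts in \Cref{alg:relibra:attemptedTC} fire, so each such node eventually forwards \wish{V^*+1} to every leader in the window \leader{V^*+1}, \ldots, \leader{V^*+f+1}, at least one of which is honest by the rotation assumption.

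The second step will be to establish that $|S^*| \ge f+1$. If $V^*$ was first entered by an honest node at some time $t \ge \GST$, then \Cref{claim:relibra:boundByzLeader} immediately yields at least $f+1$ honest nodes at view $V^*$ by time $t + 2\dissTime(f+2)$, and all of them stabilize at $V^*$ because $V^*$ is maximal. Otherwise $V^*$ was first entered before \GST; in that case the honest node(s) at $V^*$ must have received a valid \QC{V^*}, which required $\ge f+1$ honest signatures on \vote{V^*}, so at least $f+1$ honest nodes locally hold a valid \TC{V^*}. Using the \attemptedQC forwarding of \Cref{alg:relibra:attemptedQC}, at least one honest leader in the window $[V^*, V^*+f+1]$ eventually receives \TC{V^*}, re-multicasts it via \Cref{alg:relibra:leaderReceiveTC}, collects $\ge 2f+1$ \vote{V^*} replies, and multicasts a fresh \QC{V^*}; every honest voter not yet at $V^*$ then enters view $V^*$ via \Cref{alg:relibra:QC} --- a new view, already contradicting the assumption.

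Given $|S^*| \ge f+1$, the honest leader in the window \leader{V^*+1}, \ldots, \leader{V^*+f+1} accumulates $\ge f+1$ \wish{V^*+1} messages in bounded time after \GST, multicasts \TC{V^*+1} via \Cref{alg:relibra:leaderReceiveTC}, collects \vote{V^*+1} from every honest node via \Cref{alg:relibra:TC}, aggregates \QC{V^*+1} via \Cref{alg:relibra:leaderMulticastQC}, and multicasts it; every honest node then enters view $V^*+1$ via \Cref{alg:relibra:QC}, yielding the desired contradiction. The hard part will be the pre-\GST subcase of the second step: verifying that the \attemptedQC counter of at least one original honest voter for $V^*$ is still within the admissible range $\le \curr + f+1$ after \GST (so that the forwarding of \TC{V^*} actually fires post-\GST) and that the forwarded message reaches the honest leader within the allowed window. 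Once that bookkeeping is in place, the rest is a routine chase through the algorithm using the $f+1$-consecutive-honest-leader property.
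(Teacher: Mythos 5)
Your proof by contradiction is a genuinely different decomposition from the paper's. The paper argues forward: it fixes some view $v$ that an honest node has entered, invokes \Cref{claim:relibra:boundByzLeader} to obtain $f{+}1$ honest nodes co-resident in $v$, and then chases the protocol forward to show they all reach $v{+}1$ (never pinning down the base view $v$ precisely). You instead suppose no new view is entered after \GST, derive a stabilized maximum view $V^*$, split on whether $V^*$ was first entered before or after \GST, and argue a contradiction in each branch. Your framing is arguably tighter: it makes the base case explicit and exposes exactly where the protocol's bookkeeping must cooperate.

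That exposure is also where the remaining work lies, and you have flagged it honestly. In your pre-\GST subcase you need the $f{+}1$ honest holders of $\TC{V^*}$ to forward it, after \GST, to some honest leader $\leader{r}$ with $V^* \le r \le V^*{+}f{+}1$. But the algorithm forwards to $\leader{\attemptedQC}$ where $\attemptedQC$ is initialized from the node's \emph{current} view~\curr (set when it last received a \QC{}), not from $V^*$, and the guard $\attemptedQC \le \curr + f{+}1$ caps the forwarding window at $[\curr, \curr{+}f{+}1]$. If a voter's \curr is more than $f{+}1$ behind $V^*$, this window never meets $[V^*, V^*{+}f{+}1]$, and the forwarding cannot reach a leader that would act on $\TC{V^*}$ (the leader guard requires $r - (f{+}1) \le v \le r$). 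So this is a genuine gap, not just bookkeeping to grind through. Note, though, that the paper's own proof of \Cref{claim:relibra:boundByzLeader} asserts that voters forward to $\leader{v{+}1},\ldots,\leader{v{+}f{+}1}$, which matches the pseudo-code only when $\curr \ge v-1$; the paper silently makes the same assumption you would need. So you have not introduced a new weakness relative to the paper --- you have surfaced one the paper leaves implicit --- but as written your proof still relies on an unverified claim about $\attemptedQC$ ranges in the pre-\GST branch, and that claim does not hold in full generality for the algorithm as stated.
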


\begin{proof}
	From~\Cref{claim:relibra:boundByzLeader}, if an honest node enters some view~$v$, the time by which at least another~$f$ other honest nodes also enter~$v$ is bounded.
	Eventually, those honest nodes will timeout and \wishToAdvance will be invoked~(\Cref{alg:relibra:wishToAdvance}), which will cause them to send~\wish{v+1} to \leader{v+1}.

	If \leader{v+1} is honest, then it will send a~\TC{v+1} to all the nodes~(\Cref{alg:relibra:leaderReceiveTC}) which will be followed by the leader sending a~\QC{v+1}~(\Cref{alg:relibra:leaderReceiveQC}), and all honest nodes will enter view~$v+1$.

	If \leader{v+1} is not honest then the protocol dictates that the honest nodes that wished to enter~$v+1$ will continue to forward their~\wish{v+1} message to the next leaders (up to~\leader{v+f+1}, \Cref{alg:relibra:attemptedTC}) until each of them  receives~\TC{v+1}.
	This is guaranteed since at least one of those~$f+1$ leaders is honest.

	The same process is then followed for~\QC{v+1}~(\Cref{alg:relibra:attemptedQC}), and eventually all of those~$v+1$ honest nodes will enter view~$v+1$.
\end{proof}
\begin{lemma}
	\sys achieves eventual view synchronization~(\Cref{prop:roundSync}).
\end{lemma}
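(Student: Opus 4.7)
The plan is to assemble the infinite sequence~$\{\timeInterval_k, v_k\}$ required by~\Cref{prop:roundSync} by combining the three preceding claims. Given~$c \ge 0$, I would set $\wishToAdvanceInterval := c + 4\dissTime$ (trivially satisfying $\wishToAdvanceInterval \ge 4\dissTime$), and construct each~$v_k$ as a view with an honest leader that every honest node enters within a $4\dissTime$ window and then continues to execute for at least $\wishToAdvanceInterval$ time; the synchronization interval~$\timeInterval_k$ will be the sub-window of $\tExecute{\process{},v_k}$ common to all honest~\process{}.

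First I would show that honest nodes collectively enter every sufficiently large view index. By~\Cref{claim:relibra:liveness}, after~\GST some honest node enters a new view~$v^*$, and by~\Cref{claim:relibra:boundByzLeader} at least $f{+}1$ honest nodes follow into~$v^*$ within $2\dissTime(f+2)$. The final paragraph in the proof of~\Cref{claim:relibra:liveness} then drives those nodes into $v^*{+}1$: either $\leader{v^*{+}1}$ is honest and assembles both $\TC{v^*{+}1}$ and $\QC{v^*{+}1}$ directly, or the forwarding timers of~\Cref{alg:relibra:attemptedTC,alg:relibra:attemptedQC} enlist the guaranteed honest leader within the window $\leader{v^*{+}1},\ldots,\leader{v^*{+}f{+}1}$. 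Iterating, some honest node enters~$v$ for every $v \ge v^*$. Because the rotation assumption places an honest leader in every window of~$f{+}1$ consecutive views, the sub-sequence $v_1 < v_2 < \cdots$ of honest-leader views above~$v^*$ is infinite, and~\Cref{claim:relibra:honestLeaderBound} then yields that for each~$k$ all honest nodes enter $v_k$ within a $4\dissTime$ window. Writing $t_k^{\mathrm{first}} := \min_{\process{} \in \honestNodes} \tStart{\process{},v_k}$ and $t_k^{\mathrm{last}} := \max_{\process{} \in \honestNodes} \tStart{\process{},v_k}$, I would take $\timeInterval_k := [\,t_k^{\mathrm{last}},\, t_k^{\mathrm{first}} + \wishToAdvanceInterval\,]$; conditions~(2) and~(3) of~\Cref{prop:roundSync} are then immediate, and $|\timeInterval_k| \ge \wishToAdvanceInterval - 4\dissTime = c$ reduces to showing that no honest node leaves~$v_k$ before $t_k^{\mathrm{first}} + \wishToAdvanceInterval$.

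I expect this last step to be the main obstacle, and I would handle it by a quorum-intersection chain. An honest node leaves~$v_k$ only upon receipt of a $\QC{v'}$ with $v'>v_k$; a valid $\QC{v'}$ requires $2f{+}1$ votes and hence at least $f{+}1$ honest ones; an honest node sends $\vote{v'}$ only after first receiving $\TC{v'}$ (\Cref{alg:relibra:TC}); and a valid $\TC{v'}$ requires $f{+}1$ wishes and hence at least one honest $\wish{v'}$. Within the $\wishToAdvanceInterval$ window following its entry into~$v_k$ no honest node invokes \wishToAdvance (by the assumed spacing), and the re-forwarding timers of~\Cref{alg:relibra:attemptedTC,alg:relibra:attemptedQC} only re-emit wishes for the view a node is actively trying to reach---namely at most $v_k$ itself, never some $v'>v_k$---so no honest $\wish{v'}$ is sent in this window and the quorum chain cannot close. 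Care is needed to rule out stale honest wishes for some $v'>v_k$ sent before $t_k^{\mathrm{first}}$: since \wishToAdvance always dispatches $\wish{\curr + 1}$ and no honest node's~$\curr$ reaches $v_k$ before $t_k^{\mathrm{first}}$, no such stale wish exists, and the at most~$f$ Byzantine wishes alone cannot produce a valid $\TC{v'}$.
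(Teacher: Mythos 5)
Your proof follows the same overall decomposition as the paper's: use \Cref{claim:relibra:liveness} to get some honest node into a new view, \Cref{claim:relibra:boundByzLeader} to drag $f{+}1$ honest nodes along, iterate until an honest-leader view is reached, and then invoke \Cref{claim:relibra:honestLeaderBound} to place all honest nodes in that view within a $4\dissTime$ window; take $\wishToAdvanceInterval = 4\dissTime + c$. What you add is the quorum-intersection chain justifying that no honest node exits $v_k$ before $t_k^{\mathrm{first}} + \wishToAdvanceInterval$ (a $\QC{}$ needs $f{+}1$ honest votes, a vote needs a $\TC{}$, a $\TC{}$ needs $f{+}1$ wishes of which one is honest, and an honest $\wish{v'}$ with $v'>v_k$ can only originate after $\curr$ reaches $v_k$). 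The paper asserts the $|\timeInterval_k| = c$ bound without spelling out why the quorum chain cannot close early, so your argument is the more careful of the two.

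There is, however, one step in your chain that does not strictly follow from the hypothesis of \Cref{prop:roundSync}: the claim that ``within the $\wishToAdvanceInterval$ window following its entry into $v_k$ no honest node invokes \wishToAdvance.'' Property~\ref{prop:roundSync} only fixes the \emph{inter-call} spacing at $\wishToAdvanceInterval$; it says nothing about the phase of that call schedule relative to the moment a \proposeView{\cdot} signal is delivered. A node could therefore enter $v_k$ at some time $t$ and have its next scheduled \wishToAdvance fire at $t + \epsilon$ for arbitrarily small $\epsilon$, at which point it sends $\wish{v_k{+}1}$ to $\leader{v_k{+}1}$; with $f$ Byzantine wishes added, $\leader{v_k{+}1}$ already has the $f{+}1$ messages needed to emit $\TC{v_k{+}1}$. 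To close this hole you would need the stronger (and presumably intended) model in which the upper layer resets its timer on each \proposeView{\cdot} and waits at least $\wishToAdvanceInterval$ thereafter before invoking \wishToAdvance---as the paper's \Cref{sec:algs:discussion} informally suggests. The paper's own proof of this lemma has exactly the same unstated dependence on that stronger reading, so this is less an error in your proof than an ambiguity in the problem statement that your more explicit argument happens to surface.
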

\begin{proof}
	From~\Cref{claim:relibra:liveness} an honest node eventually will enter a new view, and by~\Cref{claim:relibra:boundByzLeader} at least~$f+1$ honest nodes will enter the same view within a bounded time.
	By applying~\Cref{claim:relibra:liveness} recursively and again, eventually, a view with an honest leader is reached and by~\Cref{claim:relibra:honestLeaderBound} all honest nodes will enter the view within $4\dissTime$.

	Thus, for any $c \ge 0$, if the \sys protocol is run with~$\wishToAdvanceInterval = 4\dissTime + c$ it is guaranteed that all honest nodes will eventually execute the same view for $\left| \timeInterval \right| = c$.
	
	The above arguments can be applied inductively, \ie, there exists an infinite number of such intervals and views in which view synchronization is reached, also ensuring that the views that synchronized also have an honest leader.
\end{proof}

\begin{lemma}
	\sys achieves synchronization validity~(\Cref{prop:syncValidity}).
\end{lemma}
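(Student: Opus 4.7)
The plan is to trace any \proposeView{v'} event backwards through the message dependencies of \Cref{alg:relibra} until it must rest on an honest node's signature on \wish{v'}; picking $v = v'-1$ then collapses the existential requirement of \Cref{prop:syncValidity} to a single \wishToAdvance invocation.

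First, I would note that \proposeView{v'} is emitted at \Cref{alg:relibra:advanceToRound} only after a valid \QC{v'} is received. By \Cref{alg:relibra:leaderReceiveQC}, any \QC{v'} is a threshold signature aggregating $2f+1$ \vote{v'} messages, so at least $f+1$ of the underlying signatures are from honest nodes. An honest node, in turn, only emits \vote{v'} (either at \Cref{alg:relibra:TC} or at \Cref{alg:relibra:attemptedQC}) after it has received a valid \TC{v'}. Tracing the \TC{v'} further through the leader rule at \Cref{alg:relibra:leaderReceiveTC}, a straightforward induction over the forwarding chain reduces every valid \TC{v'} to a multiset of $f+1$ distinct signatures on \wish{v'}; since there are at most $f$ Byzantine identities in the PKI, at least one of those signatures must belong to some honest node~$\process{}$.

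It remains to pin down when $\process{}$ could have produced that signature. Inspecting the algorithm, an honest node produces a fresh signature on \wish{v'} only at \Cref{alg:relibra:wishToAdvance}, upon \wishToAdvance, and only when $\curr+1 = v'$; the forwarding rule at \Cref{alg:relibra:attemptedTC} merely resends an already-signed \wish{v'}. Since $\curr$ is advanced solely at \Cref{alg:relibra:advanceToRound} upon receipt of a \QC{}, $\process{}$ was executing view $v'-1$ when it invoked \wishToAdvance. Taking $v = v'-1$, node $\process{}$ has called \wishToAdvance at least $1 = v' - v$ time while at view $v$, which establishes the property. The main subtlety I expect to handle carefully is the termination of the induction on the \TC{} chain: it relies on the unforgeability of (threshold) signatures, so that the adversary cannot conjure an initial valid \TC{v'} without at least one genuine honest signature on \wish{v'}.
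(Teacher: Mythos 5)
Your proof is correct and follows essentially the same route as the paper's: trace the \proposeView{v'} event back through \QC{v'} $\to$ \vote{v'} $\to$ \TC{v'} $\to$ \wish{v'} until an honest signature on \wish{v'} is found, then observe that such a signature can only be produced by \wishToAdvance while the node was executing view $v'-1$. Your version is somewhat more careful than the paper's three-line proof on two points that the paper glosses over: (a)~you explicitly note that \Cref{alg:relibra:leaderReceiveTC} lets a leader re-multicast a \TC{v'} it merely \emph{received}, so terminating the backward chase requires an induction on the forwarding chain grounded in threshold-signature unforgeability, and (b)~you pin down that an honest node signs \wish{v'} only when $\curr = v'-1$, which is what lets you instantiate $v = v'-1$ and meet the ``at least $v'-v$ times'' clause of \Cref{prop:syncValidity} with exactly one call. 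These are worthwhile additions, but they do not constitute a different argument.
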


\begin{proof}
	To enter a new view~$v$ a~\QC{v} is needed, which is consisted of~$2f+1$ \vote{v} messages \ie, at least~$f+1$ are from honest nodes.
	An honest node will send~\vote{v} message only when it receives a~\TC{v} message, that requires~$f+1$ \newRound{v} message, meaning at least one of those messages came from an honest node.

	An honest node will send~\wish{v} when the upper-layer protocol invokes \wishToAdvance while it was in view~$v-1$.
\end{proof}

This concludes the proof that \sys is a synchronizer for any~$\wishToAdvanceInterval \ge 4\dissTime$.
Similar to the broadcast-based synchronizer, it allows upper-layer protocols to determine the time they spend in each view.

\paragraph{Latency and communication}
Let~\rMax be the maximum view an honest node is in at~\GST, and let~$X$ denote the number of consecutive Byzantine leaders after~\rMax.
Assuming that leaders are randomly allocated to views, then~$X$ is a random variable of a geometrical distribution with a mean of~$n / (n-f)$.
This means that in the worst case of $t = f = \left\lfloor n/3 \right\rfloor$, then ${\mathbb{E}(X) = (3f+1)/(2f+1) \approx 3/2}$.

Since when~$f+1$ honest nodes at view~$v$ want to advance to view~$v+1$, and if~\leader{v+1} is honest,  all honest nodes enter view~$v+1$ in constant time~(\Cref{claim:relibra:honestLeaderBound}), the latency for view synchronization, in general, is~$O(X {\cdot} \dissTime)$.
For the same reasoning, this is also the case for any two intervals between view synchronizations~(see~\Cref{def:laqtency}).

In the worst-case of~$X = t$, where~$t$ is the number of actual failures during the run, then latency is linear in the view duration, \ie, $O(t {\cdot} \dissTime)$.
But, in the expected case of a constant number of consecutive Byzantine leaders after~\rMax, the expected latency is~$O(\dissTime)$.

For communication complexity, there is a difference between Byzantine failures and benign ones.
If a Byzantine leader of a view~$r$ obtains~\TC{v} for~$r-(f+1) \le v \le r$, then it can forward the~\TC{v} to all the $f+1$ leaders that follow view~$v$ and those leaders will multicast the message~(\Cref{alg:relibra:leaderReceiveTC}), leading to expected~$O(n^2)$ communication complexity, in the case of at least one Byzantine leader after \rMax.
In the worst-case of a cascade of~$t$ failures after \rMax, the communication complexity is~$O(t{\cdot}n^2)$.

In the case of benign failures, communication complexity is dependent on~$X$, since the first correct leader after~\rMax will get all nodes to enter his view and achieve view synchronization, and the benign leaders before it will only cause delays in terms of latency, but will not increase the overall number of messages sent.
Thus, in general, the communication complexity with benign failures is~$O(X {\cdot} n)$.
In the worst-case of~$X = t$ communication complexity is~$O(t {\cdot} n)$, but in the average case it is linear, \ie,~$O(n)$.
For the same reasoning, this is also the case between any consecutive occurrences of view synchronization~(see~\Cref{def:communicationCost}). 

To sum-up, the \emph{expected} latency for both \emph{benign} and \emph{Byzantine} failures is~$O(\dissTime)$, and \emph{worst-case}~${O(t {\cdot} \dissTime)}$.
Communication complexity for \emph{Byzantine} nodes is \emph{optimistically}~$O(n)$, \emph{expected}~${O(n^2)}$ and \emph{worst-case}~$O(t{\cdot}n^2)$ and for benign failures is \emph{expected}~$O(n)$ and \emph{worst-case}~$O(t {\cdot} n)$.

\paragraph*{Discussion}
\sys achieves expected constant latency and linear communication under a broad set of assumptions.
It is another step in the direction of reaching the quadratic communication lower bound of Byzantine consensus in an asynchronous model~\cite{dolev1985bounds}.

In addition to~\sys we present in~\Cref{app:algorithms} two more view synchronization algorithms.
The first one is view doubling, where nodes simply double their view duration when entering a new view, which guarantees that eventually all nodes will be in the same view for sufficiently long.
The other algorithm is borrowed from consensus protocols such as PBFT~\cite{castro1999practical} and SBFT~\cite{gueta2019sbft}.
In~\Cref{sec:algs:discussion} we present a comprehensive discussion on all three algorithms.

\section{Usages and Implementations of Synchronizers}
\label{sec:usages}

In this section, we describe real-world usages of the view synchronization algorithms.
First, it is worth mentioning that many times the terms ``phase'', ``round'', and ``view'' are mixed in different works.
In this work when ``view'' is mentioned, the meaning is that all the nodes agree on some integer value, mapped to a specific node that acts as the leader.

There are SMR protocols where as long as the leader is driving progress in the protocol it is not changed.
This will correspond to all the nodes staying in the same view, and this view can be divided into many phases. \Eg, in PBFT~\cite{castro1999practical} a single-shot consensus consists of two phases. 
In an SMR protocol based on PBFT a view can consist of many more phases, all with the same leader as long as progress is made, and there is no bound on the view duration.

As mentioned in~\Cref{sec:intro:viewSyncronization}, in HotStuff~\cite{yin2019hotstuff}, the view synchronization logic is encapsulated in a module named a PaceMaker, but does not provide a formal definition of what the PaceMaker does, nor an implementation.
The most developed work which adopted HotStuff as the core of its consensus protocol is LibraBFT~\cite{baudet2019librabftV3}.
In LibraBFT, a module also named a PaceMaker is in charge of advancing views.
In this module whenever a node timeouts of its current view, say view~$v$, it sends a message named ``TimeoutMsg, $v$'', and whenever it receives $2f+1$ of these messages, it advances to view~$v$.
In addition, the node sends an aggregated signature of these messages to the leader of view~$v$, which according to the paper, if the leader of~$v$ is honest, guarantees that all other nodes will enter view~$v$ withing~$2 \dissTime$.
The current implementation of the PaceMaker is linear communication as long as there are honest leaders, but quadratic upon reaching a view with a Byzantine one.
The latency is constant.

Many other works on consensus rely on view synchronization as part of their design.
For example, in~\cite{gupta2019proofOfExecution} a doubling view synchronization technique is used: ``For the view-change process, each replica
will start with a timeout~$\delta$ and double this timeout after each view-change (exponential backoff). 
When communication becomes reliable, exponential backoff guarantees that all replicas will eventually
view-change to the same view at the same time.''
\section{Related Work} \label{sec:relatedWork}
\subsubsection*{View synchronization in consensus protocols}
The idea of doubling round duration to cope with partial synchrony borrows from
the DLS work~\cite{dwork1988consensus}, and has been employed in
PBFT~\cite{castro1999practical} and in various works based on DLS/PBFT~\cite{baudet2019librabftV3,buchman2018tendermint,yin2019hotstuff}. 
In these works, nodes double the length of each view when no progress is made.
The broadcast-based synchronization algorithm is also employed as part of the consensus protocol in works such as PBFT.

HotStuff~\cite{yin2019hotstuff} encapsulates view synchronization in a separate
module named a PaceMaker. Here, we
provide a formal definition, concrete solutions, and performance analysis of
such a module.
HotStuff is the core consensus protocol of various works such as Cypherium~\cite{Cypherium2019HotstuffBlog}, PaLa~\cite{chan2018pala} and LibraBFT~\cite{baudet2019librabftV3}.
Other consensus protocols such as Tendermint~\cite{buchman2018tendermint} and Casper~\cite{buterin2017casper} reported issues related to the liveness of their design~\cite{milosevic2018TendermintLivenessIssue,Pyrofex2018CasperLivenessIssue}.

\paragraph*{Notion of time in distributed systems}
Causal ordering is a notion designed to give partial ordering to events in a distributed system.
The most known protocols to provide such ordering are Lamport Timestamps~\cite{lamport1978time} and vector clocks~\cite{fidge1988timestamps}.
Both works assume a non-crash setting.

Another line of work stemmed from Awerbuch work on synchronizers~\cite{awerbuch1985complexity}.
The synchronizer in Awerbuch's work is designed to allow an algorithm that is designed to run in a synchronous network to run in an asynchronous network without any changes to the synchronous protocol itself. 
This work is orthogonal to the work in this paper.

Recently, Ford published preliminary work on Threshold Logical Clocks (TLC)~\cite{ford2019threshold}.
In a crash-fail asynchronous setting, TLC places a barrier on view advancement, \ie, nodes advance to view~$v+1$ only after a threshold of them reached view~$v$.
A few techniques are also described on how to convert TLCs to work in the presence of Byzantine nodes.
The TLC notion of a view ``barrier'' is orthogonal to view synchronization,
though a 2-phase TLC is very similar to our reliable broadcast synchronizer.

\paragraph*{Failure detectors}
The seminal work of Chandra and
Toueg~\cite{chandra1996weakest}, \cite{chandra1996unreliable} introduces the leader election abstraction, denoted ~$\Omega$, and prove it is the weakest failure detector needed to solve consensus.
By using~$\Omega$, consensus protocols can usually be written in a more natural way.
The view synchronization problem is similar to $\Omega$, but differs in several ways. First,
it lacks any notion of leader and isolates the view synchronization component.
Second, view
synchronization adds recurrence to the problem definition. Third, it has a built-in notion of view-duration: nodes commit to spend a constant tine in a view before moving to the next. 
Last, this paper focuses on latency and communication costs of synchronizer
implementations.

\paragraph*{Latency and message communication for consensus}
Dutta et al.~\cite{dutta2007overhead} look at the number of rounds it takes to reach consensus in the crash-fail model after a time defined as GSR~(Global Stabilization Round) which only correct nodes enter. 
This work provides an upper and a lower bound for reaching consensus in this setting.
Other works such as~\cite{alistarh2008solve,dutta2005fast} further discuss the latency for reaching consensus in the crash-fail model.
These works focus on the latency for reaching consensus after \GST.
Both bounds are tangential to our performance measures, as they analyze round latency.

Dolev et al.~\cite{dolev1985bounds} showed a quadratic lower bound on communication complexity to reach Byzantine broadcast, which can be reduced to consensus.
This lower bound is an intuitive baseline for work like ours, though it remains
open to prove a quadratic lower bound on view synchronization per se.

\paragraph*{Clock synchronization}
The clock synchronization problem~\cite{lamport1985synchronizing} in a distributed system requires that the maximum difference between the local clock of the participating nodes is bounded throughout the execution, which is possible since most works assume a synchronous setting.
The clock synchronization problem is well-defined and well-treaded, and there are many different algorithms to ensure this in different models, \eg,~\cite{cristian1989probabilistic,kopetz1987clock,srikanth1987optimal}.
In practical distributed networks, the most prevalent protocol
is~NTP~\cite{mills1991internet}. Again, clock synchronization is an orthogonal
notion to view synchronization, the latter guaranteeing to and stay in
a view within a bounded window, but does not place any bound on the views of different
nodes at any point in time.

\section{Conclusion} \label{sec:conclusion}
We formally defined the \emph{Byzantine view synchronization} problem, which
bridges classic works on failure detectors aimed to solve one-time consensus,
and SMR which consists of multiple one-time consensus instances.
We presented \sys which is a view synchronization algorithm that displays linear communication cost and constant latency under a broad variety of scenarios.

\bibliographystyle{usenixjournal}
\bibliography{references}
\clearpage
\appendix
\section{Protocols for View Synchronization}
\label{sec:relibraAlg}
\label{sec:algorithms}
\label{app:algorithms}

In this section we place into the view synchronization framework two view synchronization algorithms which are used in various consensus protocols, and prove their correctness, as well as discuss their latency and message complexity.

All protocol messages between nodes are signed and verified; for brevity, we
omit the details about the cryptographic signatures. 

\subsection{View Doubling Synchronizer}
\label{sec:algs:roundDoubling}

\newcommand{\currDuration}{\ensuremath{\textit{view\_duration}}\xspace}
\newcommand{\wishVar}{\ensuremath{\textit{wish}}\xspace}
\newcommand{\firstRoundDuration}{\ensuremath{\beta}\xspace}

\begin{algorithm}[t] \footnotesize
	\caption{View doubling synchronizer for node $\process{}$}
	\label{alg:roundDoubling}
	\SetAlgoNoEnd
	\DontPrintSemicolon
	\SetInd{0.4em}{0.4em}

	\KwInitialize(:) {
		$\wishVar \gets 0$ \;
		$\curr \gets 0$ \label{alg:roundDoubling:currInitialize} \;
		$\currDuration \gets \firstRoundDuration $ \commentx{\firstRoundDuration is a predefined value for the duration of the first view} \label{alg:roundDoubnling:durationInitialize} \;
	}

	\BlankLine

	\KwOn({$\wishToAdvance$: } \label{alg:roundDoubling:timeout}) {
		$\wishVar \gets \wishVar +1$ \;
	}
	
	\KwAfter({\currDuration has passed since last changing its value:} \label{alg:roundDoubling:internalTimeout})
	{
		$\curr \gets \curr + 1$ \;
		$\currDuration \gets 2 \times \currDuration$ \;
		\If{ $\wishVar \ge \curr$ \label{alg:roundDoubling:ifCond}} 
		{
			\proposeRound{\curr} \;
		}
	}

\end{algorithm}

\subsubsection{Overview}
A solution approach inspired by PBFT~\cite{castro1999practical} is to use view
doubling as the view synchronization technique.
In this approach, each view has a timer, and if no progress is made the node tries to move to the next view and doubles the timer time for the next view.
Whenever progress is made, the node resets its timer.
This approach is intertwined with the consensus protocol itself, making it hard to separate, as the messages of the consensus protocol are part of the mechanism used to reset the timer.

We adopt this approach and turn it into an independent synchronizer that requires no messages.
Fist, the nodes need to agree on some predefined constant~$\firstRoundDuration > 0$ which is the duration of the first view.
Next, there exists some global view duration mapping $\roundDoubling{\cdot}: \mathbb{N} \mapsto \mathbb{R}^+$, which maps a view~$v$ to its duration: $\roundDoubling{v} =2^v \firstRoundDuration$.
A node in a certain view must move to the next view once this duration passes,
regardless of the outer protocol actions.

The view doubling protocol is described in~\Cref{alg:roundDoubling}.
A node starts at view~$0$~(\Cref{alg:roundDoubling:currInitialize}) and a view duration of~$\firstRoundDuration > 0$~(\Cref{alg:roundDoubnling:durationInitialize}).
Next, when~\wishToAdvance is called, a counter named~\wishVar is incremented~(\Cref{alg:roundDoubling:timeout}).
This counter guarantees validity by moving to a view $v$ only when the
\wishVar counter reaches $v$.
Every time a view ends~(\Cref{alg:roundDoubling:internalTimeout}), an internal counter~\curr is incremented, and if the \wishVar allows it, the synchronizer outputs \proposeRound{v} with a new view~$v$.

\subsubsection{Correctness}
We show that the view doubling protocol achieves the properties required by a synchronizer.

\begin{lemma}
	\label{lem:roundDoublingSynchronization}
	The view doubling protocol achieves view synchronization~(\Cref{prop:roundSync}).
\end{lemma}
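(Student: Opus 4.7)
\medskip

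\noindent\textbf{Proof plan.} The plan is to exploit the fact that the view duration $\roundDoubling{v}=2^v\firstRoundDuration$ grows without bound, while the difference between honest nodes' local clocks is fixed. The view doubling protocol uses no messages, so GST plays no role here: the entire argument is about aligning purely local timers.

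First I would set up the geometry. Let $T_{\max}\triangleq\max_{\process{}\in\honestNodes}\tStart{\process{},0}-\min_{\process{}\in\honestNodes}\tStart{\process{},0}$ be the spread of honest start times; by assumption $T_{\max}$ is a (possibly unknown but) finite constant. A straightforward induction on $v$ shows that, ignoring the \wishVar\ guard, node $\process{}$ increments its internal \curr\ to $v$ at global time $\tStart{\process{},0}+(2^{v}-1)\firstRoundDuration$. Hence across all honest nodes the local times at which \curr\ would reach $v$ lie inside a window of width at most $T_{\max}$.

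Next I would handle the \wishVar\ guard. Since the environment calls \wishToAdvance\ every \wishToAdvanceInterval\ time units, at the moment when an honest node's timer reaches \curr\ $=v$, its \wishVar\ counter is at least $\big\lfloor(2^{v}-1)\firstRoundDuration/\wishToAdvanceInterval\big\rfloor$. Because $2^{v}$ grows exponentially in $v$ while $v$ grows only linearly, there exists a view threshold $V^{\star}(\wishToAdvanceInterval,\firstRoundDuration)$ such that for every $v\ge V^{\star}$ we have $\wishVar\ge v$ whenever \curr\ is incremented to $v$. Consequently, every honest node emits \proposeRound{v} for every $v\ge V^{\star}$, and does so inside the width-$T_{\max}$ window computed above. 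I would fix any $\wishToAdvanceInterval>0$ (say $\wishToAdvanceInterval\triangleq\firstRoundDuration$); the choice affects only $V^{\star}$, not the validity of the argument.

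Then I would combine overlap with leader selection. For an honest node, the interval \tExecute{\process{},v} has length exactly $\roundDoubling{v}=2^{v}\firstRoundDuration$ once $v\ge V^{\star}$, and the start times of view $v$ across honest nodes lie in a window of width at most $T_{\max}$. Therefore the common intersection $\bigcap_{\process{}\in\honestNodes}\tExecute{\process{},v}$ has length at least $2^{v}\firstRoundDuration-T_{\max}$. Given the target $c\ge 0$, choose any $v\ge V^{\star}$ with $2^{v}\firstRoundDuration-T_{\max}\ge c$ whose leader \leader{v} is honest; such $v$ exist, and in fact there are infinitely many, because among every $f{+}1$ consecutive views at least one has an honest leader (this is exactly the assumption on the rotating \leader{\cdot} mapping from \S\ref{sec:cogsworth}, which \sys\ shares and which we inherit here). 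Enumerating this infinite set gives $\{\timeInterval_{k},v_{k}\}_{k=1}^{\infty}$ satisfying the three conditions of \Cref{prop:roundSync}.

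The main obstacle I anticipate is bookkeeping the \wishVar\ guard precisely: one must verify that skipping views due to a lagging \wishVar\ cannot defeat synchronization. This is handled by the threshold $V^{\star}$: past this point \wishVar\ never lags, so \proposeRound{v} is emitted for \emph{every} $v\ge V^{\star}$ by every honest node, and the overlap argument applies verbatim. Everything else is elementary arithmetic on geometric series and does not need to be written out in detail.
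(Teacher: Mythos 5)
Your proof is correct and follows essentially the same route as the paper's: both exploit the exponential growth of the view duration $\roundDoubling{v}=2^v\firstRoundDuration$ against a fixed initial spread (your $T_{\max}$, the paper's $\firstRoundDuration\bigl(2^{\initialRound{n}}-2^{\initialRound{1}}\bigr)$) to make the overlap $2^v\firstRoundDuration-T_{\max}$ exceed any target $c$, and then pick infinitely many such views with honest leaders using the rotating $\leader{\cdot}$ assumption. The paper sidesteps your threshold $V^{\star}$ by simply restricting to $0<\wishToAdvanceInterval\le\firstRoundDuration$, which makes the guard $\wishVar\ge\curr$ hold unconditionally from view $0$ onward; your $V^{\star}$ argument is a modest generalization to arbitrary $\wishToAdvanceInterval>0$, not a structurally different proof.
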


\begin{proof}
	Since this protocol does not require sending messages between nodes, the Byzantine nodes cannot affect the behavior of the honest nodes, and we can treat all nodes as honest.

	Recall that $t=0$ denotes the time by which all the honest nodes started their local execution of~\Cref{alg:roundDoubling}.
	Let~\initialRound{i} be the view at which node~$\process{i}$ is at during $t=0$.	
	W.l.o.g assume ${\initialRound{1} \le \initialRound{2} \le \cdots \le \initialRound{n}}$ at time $t=0$.
	It follows from the definition of~\initialRound{i} and the sum of a geometric series that 
	\begin{equation} \label{eq:roundDoubling:tProp}
	\tStart{\process{i},v} = \firstRoundDuration \left( 2^v -2^{\initialRound{i}} \right).
	\end{equation}

	We begin by showing that for every $i \le j$ the following condition holds: $\tStart{\process{i},v} \ge \tStart{\process{j},v}$ for any view~$v$.
	Let $k = \initialRound{i}$ and $l = \initialRound{j}$.
	From the ordering of the node starting times, for all $k \le l$.
	We get:
	\begin{equation*}
		\tStart{\process{i},v} \ge \tStart{\process{j},v} \Leftrightarrow \firstRoundDuration \left( 2^v-2^k \right) \ge \firstRoundDuration \left( 2^v - 2^l \right) \Leftrightarrow l \ge k.
	\end{equation*}
	Hence, for $i \le j$, since at $t=0$ node $\process{j}$ had a view number larger than
$\process{i}$, then $\process{j}$ will start all future views before $\process{i}$. 

	Next, let $k =\initialRound{1}$ and $l = \initialRound{n}$, \ie, the
minimal view and the maximal view at $t=0$ respectively.
	To prove that the first interval of view synchronization is achieved, it suffices to show that for any constant~$c \ge 0$ there exists a time interval~\timeInterval and a view~$v$ such that $\left| \timeInterval \right| \ge c$ and $\tStart{n,v+1} - \tStart{1,v} \ge | \timeInterval |$.
	Using this, we will show that there exists an infinite number of
	such intervals and views that will conclude the proof.
	This also ensures that there is an infinite number of such views with honest leaders.

	Indeed, first note that as shown above, node~$\process{n}$ will start view~$v$ before any other node in the system.
	The left-hand side of the equation is the time length in which both
node~$\process{n}$ and node~$\process{1}$ execute together view~$v$.
	If the left-hand side is negative, then there does not exist an overlap, and if it is positive then an overlap exists.

	We get
	\begin{equation} \label{eq:roundDoubling} 
	\tStart{n,v+1} - \tStart{1,v} \ge | \timeInterval | \Leftrightarrow \firstRoundDuration \left( 2^{v+1} -2^l \right)  - \firstRoundDuration \left( 2^v -2^k \right) \ge | \timeInterval | \Leftrightarrow \firstRoundDuration \left[ 2^v + \left( 2^k -2^l \right) \right] \ge | \timeInterval |.
	\end{equation}

	For any $c \ge 0$ there exists a minimum view number~$v'$ such that the inequality holds, and since $k$ is the minimum view number at $t = 0$ this solution holds for any other node~$\process{i}$ as well.
	In addition, for any $v \ge v'$ the inequality also holds, meaning there is an infinite number of solutions for it, including an infinite number of views with an honest leader.
		
	If \wishToAdvance is called in intervals with $0< \wishToAdvanceInterval
\le \firstRoundDuration$ then by the time the value of \curr reaches some view value~$v$, \wishVar
will always be bigger than \curr, meaning the condition in
\Cref{alg:roundDoubling:ifCond} will always be true, and the synchronizer will
always propose view~$v$ by the time stated in~\Cref{eq:roundDoubling:tProp}.
\end{proof}

\begin{lemma}
	The view doubling protocol achieves synchronization validity~(\Cref{prop:syncValidity}).
\end{lemma}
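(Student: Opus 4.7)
The plan is to exploit the fact that the view doubling algorithm exchanges no messages, so each honest node's synchronizer behavior depends solely on that node's own local counters \wishVar and \curr. I will argue that whenever an honest node's synchronizer issues \proposeRound{v'}, that very node serves as the witness for the existential clause in Property~2.

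First I would inspect the algorithm and observe that \proposeRound{v'} is emitted only inside the \textbf{after} block, and only when the guard $\wishVar \ge \curr$ holds while $\curr = v'$. Since \wishVar is incremented by one on each \wishToAdvance invocation and is never decremented, at the moment of the signal the node has invoked \wishToAdvance at least $v'$ times since the start of its local execution.

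Next, let $v_{\textit{prev}}$ be the largest view strictly less than $v'$ for which this node previously issued a \proposeRound signal, taking $v_{\textit{prev}} = 0$ in the base case (consistent with the model convention that \proposeRound{0} fires at local startup). By the definition of \tExecute{} in \Cref{sec:problemDef}, the node is continuously executing view $v_{\textit{prev}}$ throughout the interval between these two signals. At the earlier signal the guard forced $\wishVar \ge v_{\textit{prev}}$; at the current signal $\wishVar \ge v'$. Because \wishVar is monotonically non-decreasing and grows only through \wishToAdvance, the node must have invoked \wishToAdvance at least $v' - v_{\textit{prev}}$ times during its execution of view $v_{\textit{prev}}$, which is exactly the witness required by Property~2 with $v = v_{\textit{prev}}$.

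I do not anticipate significant obstacles: the argument is a short bookkeeping step on the monotone counter \wishVar, and the only subtlety is the base case where no prior \proposeRound has fired, handled cleanly by the initial \proposeRound{0} signal assumed by the model. Byzantine nodes play no role here since the protocol is message-free, so the witness can always be taken to be the honest node whose local synchronizer emitted the signal.
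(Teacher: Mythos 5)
Your overall strategy---using the signaling node itself as the witness, and tracking the monotone counter \wishVar---is reasonable, and it matches the spirit of the paper's one-line argument (``the \textbf{if} condition ensures that the output of the synchronizer will always be a view that a node wished to advance to''). However, the central inference in your proof is logically invalid. You argue: at the earlier signal $\wishVar \ge v_{\textit{prev}}$, at the current signal $\wishVar \ge v'$, therefore at least $v' - v_{\textit{prev}}$ calls occurred in between. This does not follow. Two lower bounds on a non-decreasing quantity do not bound the \emph{increase} between them: if \wishVar was already $\ge v'$ at the moment \proposeRound{$v_{\textit{prev}}$} fired (which the guard $\wishVar \ge v_{\textit{prev}}$ permits), then \emph{zero} \wishToAdvance calls during view $v_{\textit{prev}}$ would still let \proposeRound{$v'$} fire later.

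This gap is not in some peripheral corner case; it bites exactly in the setting the paper actually assumes. Under the hypothesis $\wishToAdvanceInterval \le \firstRoundDuration$ one can show by induction that $\wishVar \ge \curr$ at every timer expiration, so no view is ever skipped and $v_{\textit{prev}} = v' - 1$; but in view doubling, \wishVar typically runs strictly ahead of \curr (views keep getting longer while the call interval is fixed), so $\wishVar$ at \proposeRound{$v'-1$} generally exceeds $v'-1$, and your bookkeeping yields nothing. The missing ingredient is the protocol-specific fact that view $v'-1$ has duration $2^{v'-1}\firstRoundDuration \ge \wishToAdvanceInterval$, so at least one \wishToAdvance call necessarily occurs during it, which is exactly the $v' - (v'-1) = 1$ call the property demands. (In the other case $v' > v_{\textit{prev}} + 1$, your argument can be salvaged, but only after first observing that $\wishVar$ must equal $v_{\textit{prev}}$ \emph{exactly} at the earlier signal---otherwise $v_{\textit{prev}}+1$ would also have been signaled, contradicting the choice of $v_{\textit{prev}}$.) So the conclusion you want is true, and $v_{\textit{prev}}$ is indeed the right witness, but you need to derive the call count from the doubling of view durations rather than from monotonicity of \wishVar alone.
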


\begin{proof}
	The if condition in~\Cref{alg:roundDoubling:ifCond} ensures that the output of the synchronizer will always be a view that a node wished to advance to.
\end{proof}

This concludes the proof that view doubling is a synchronizer for any~$0 < \wishToAdvanceInterval \le \firstRoundDuration$.

\subsubsection{Latency and communication}

Since the protocol sends no messages between the nodes, it is immediate
that the communication complexity is $0$.

As for latency, the minimal~$v^*$ satisfying ~\Cref{eq:roundDoubling} grows
with $c \left(2^{\initialRound{n}} - 2^{\initialRound{1}} \right)$. Since the
initial view-gap $\initialRound{n} - \initialRound{1}$ is unbounded, so is the
view $v^*$ in which synchronization is reached. 
The latency to synchronization is $\tStart{\process{1}, v^*} = 2^{v^*} -
2^{\initialRound{1}}$, also unbounded.

\subsection{Broadcast-Based Synchronizer} \label{sec:algs:bracha}
\begin{algorithm}[t] \footnotesize
	\caption{Broadcast-based synchronizer for node \process{}}
	\label{alg:bracha}
	\SetAlgoNoEnd
	\DontPrintSemicolon
	\SetInd{0.4em}{0.4em}

	\KwInitialize(:) {
		$\curr \gets 0$
	}

	\BlankLine
	\BlankLine

	\KwOn({$\wishToAdvance$: } \label{alg:bracha:timeout} ) {
		\outcast $\newRound{\curr + 1}$ to all nodes (including self)
	}
	\BlankLine
	\KwUpon({receiving $f+1$ \newRound{v} messages and not sending \newRound{v} before:} \label{alg:bracha:sendNewRound}) {
		\outcast \newRound{v} to all nodes (including self)
	}

	\BlankLine

	\KwUpon({receiving $2f{+}1$ \newRound{v} messages} \KwAnd $v > \curr$ \label{alg:bracha:advanceRound}) {
		$\curr \gets v$ \;
		\proposeRound{v} 
	}

\end{algorithm}
\subsubsection{Overview}
Another leaderless approach is based on the Bracha reliable broadcast protocol~\cite{bracha1987asynchronous} and is  presented in~\Cref{alg:bracha}.
In this protocol, when a node wants to advance to the next view~$v$ it multicasts a \newRound{v} message (\emph{multicast} means to send the message to all the nodes including the sender)~(\Cref{alg:bracha:timeout}).
When at least $f+1$ \newRound{v} messages are received by an honest node, it multicasts \newRound{v} as well~(\Cref{alg:bracha:sendNewRound}).
A node advances to view~$v$ upon receiving $2f+1$ \newRound{v} messages~(\Cref{alg:bracha:advanceRound}).

\subsubsection{Correctness} \label{test}
We start by showing that the broadcast-based synchronizer achieves eventual view synchronization~(\Cref{prop:roundSync}) for any $\wishToAdvanceInterval \geq 2\dissTime$. Thus, the claims and lemmas below assume this.

\begin{claim}
	\label{claim:brachaAgreement}
	After GST, whenever an honest node enters view~$v$ at time~$t$, all other honest nodes enter view~$v$ by~$t+2\dissTime$, \ie, $\max_{\process{i} \in \honestNodes} \left\lbrace \tStart{\process{i}, v} \right\rbrace - \min_{\process{j} \in \honestNodes} \left\lbrace \tStart{\process{j},v} \right\rbrace \le 2 \dissTime.$
\end{claim}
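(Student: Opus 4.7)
The plan is to follow the standard Bracha reliable broadcast argument, adapted to the view entry rule of \Cref{alg:bracha}. Let $\process{}$ be an honest node that enters view $v$ at time $t \geq \GST$. By the condition in \Cref{alg:bracha:advanceRound}, $\process{}$ received $2f{+}1$ distinct \newRound{v} messages by time $t$. Since at most $f$ of these can come from Byzantine senders, at least $f{+}1$ of them originate from honest nodes; call this set $H'$.

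Next, I would argue that every honest node has received those $f{+}1$ messages from $H'$ by time $t + \dissTime$. Each node in $H'$ multicast \newRound{v} (either through \Cref{alg:bracha:timeout} or \Cref{alg:bracha:sendNewRound}) at some time $t_s \leq t$. If $t_s \geq \GST$, the message arrives at every honest node by $t_s + \dissTime \leq t + \dissTime$; if $t_s < \GST$, our model assumption gives delivery by $\GST + \dissTime \leq t + \dissTime$ since $t \geq \GST$. Either way, by $t + \dissTime$ every honest node has at least $f{+}1$ copies of \newRound{v} in hand.

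Then I would apply the echoing rule on \Cref{alg:bracha:sendNewRound}: any honest node that has not yet multicast \newRound{v} does so as soon as its $f{+}1$ threshold is crossed, hence by $t + \dissTime$. These echo multicasts are all sent at or after \GST, so they are delivered to every honest node by $t + 2\dissTime$. Counting the $n - f \geq 2f{+}1$ honest senders, every honest node holds $2f{+}1$ \newRound{v} messages by time $t + 2\dissTime$ and triggers \Cref{alg:bracha:advanceRound}, entering view $v$ by that time.

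The main subtlety I expect is the side condition $v > \curr$ in \Cref{alg:bracha:advanceRound}: if another honest node had already advanced past $v$, it would not re-enter $v$. I would sidestep this by reading the claim as ``$\curr \geq v$ by $t + 2\dissTime$,'' which is the content needed for synchronization and is preserved by the monotonicity of \curr; equivalently, one can note that any earlier advance past $v$ by an honest node already implies $\curr \geq v$ at that node well before $t + 2\dissTime$. A secondary care-point is making sure that $H'$ consists of honest senders whose multicasts genuinely reach \emph{all} honest nodes (not only $\process{}$), but this follows because an honest sender always uses \outcast to the full set of nodes, not a targeted send.
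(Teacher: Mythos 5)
Your proof follows essentially the same argument as the paper: $2f{+}1$ received messages imply $f{+}1$ honest multicasters, so all honest nodes cross the $f{+}1$ echo threshold by $t{+}\dissTime$ and hold $2f{+}1$ messages by $t{+}2\dissTime$. You additionally spell out the pre-\GST delivery corner case and the $v > \curr$ guard, which the paper's proof glosses over, but the core decomposition and timing argument are identical.
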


\begin{proof}
	Suppose an honest node $\process{i} \in \honestNodes$ enters view~$v$ at time~$\tStart{\process{i}, v} = t$, then it received $2f+1$ \newRound{v} messages, from at least~${f+1}$ honest nodes~(\Cref{alg:bracha:advanceRound}).
	
	Since the only option for an honest node to disseminate \newRound{v} message is by multicasting it, then by~$t + \dissTime$ all nodes will receive at least~$f+1$ \newRound{v} messages.
	Then, any left honest nodes (at most~$f$ nodes) will thus receive enough \newRound{v} to multicast the message on their own~(\Cref{alg:bracha:sendNewRound}) which will be received by~$t + 2 \dissTime$ by all the nodes.
	This ensures that all the honest nodes receive $2f+1$ \newRound{v} messages and enter view $v$ by~$t + 2\dissTime$.
\end{proof}

\begin{claim}
	\label{claim:brachaLiveness}
	After GST, eventually an honest node~$\process{i}$ enters some new view.
\end{claim}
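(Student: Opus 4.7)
My approach is to split the argument on whether the honest nodes are already ``aligned'' at \GST. Let $v_{\max}$ denote the largest view held by any honest node at time \GST. I would handle two cases separately.

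In the first case, some honest node~$\process{j}$ is at a view strictly below $v_{\max}$ at \GST. The plan is to show that $\process{j}$ enters $v_{\max}$ shortly after \GST, which is a new view for it. Since some honest node did reach $v_{\max}$, it must have collected $2f+1$ $\newRound{v_{\max}}$ messages at \Cref{alg:bracha:advanceRound}, at least $f+1$ of which were multicast by honest senders. The model guarantees that every message sent before \GST is delivered by $\GST + \dissTime$, so those honest $\newRound{v_{\max}}$ messages reach every honest node by $\GST + \dissTime$. The relay rule on \Cref{alg:bracha:sendNewRound} then causes every honest node that has not already done so to multicast $\newRound{v_{\max}}$ by $\GST + \dissTime$, and within a further $\dissTime$ each honest node has collected $2f+1$ such messages and advances to $v_{\max}$ via \Cref{alg:bracha:advanceRound}.

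In the remaining case, every honest node is already at view $v_{\max}$ at \GST (the base case $v_{\max}=0$, where all honest nodes are initialized at $0$, falls here). Using the assumption $\wishToAdvanceInterval \ge 2\dissTime$, within $\wishToAdvanceInterval$ time after \GST each honest node invokes \wishToAdvance and multicasts $\newRound{v_{\max}+1}$ at \Cref{alg:bracha:timeout}. Since $|\honestNodes| \ge 2f+1$ and all post-\GST messages arrive within $\dissTime$, by time $\wishToAdvanceInterval + \dissTime$ after \GST every honest node has received $2f+1$ $\newRound{v_{\max}+1}$ messages and advances via \Cref{alg:bracha:advanceRound}, entering a new view.

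The main obstacle I anticipate is the bookkeeping in Case~1: justifying that the $f+1$ honest multicasts of $\newRound{v_{\max}}$ propagate to all honest nodes even though they may have been sent arbitrarily long before \GST, and even though the honest senders themselves may currently sit at different views and not be re-sending. The model assumption that every pre-\GST message is delivered by $\GST + \dissTime$ is precisely what rescues this step, and I would make that invocation explicit so that the timing bounds $\GST + \dissTime$ and $\GST + 2\dissTime$ are properly justified rather than asserted.
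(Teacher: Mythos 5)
Your proof is correct, and it takes a genuinely different route from the paper's. The paper's proof argues by induction starting from view~$0$: it observes that \wishToAdvance is called every \wishToAdvanceInterval so eventually $f{+}1$ nodes send \wish{1}, everyone reaches view~$1$, and ``the above argument can be applied inductively'' up to the current view. That framing is somewhat informal --- the induction mixes pre-\GST and post-\GST behavior and never pins down a time bound. Your approach instead looks directly at the configuration at \GST, setting $v_{\max}$ to the largest honest view there, and splits on whether all honest nodes already agree on $v_{\max}$. In the ``spread out'' case you exploit the fact that whoever reached $v_{\max}$ must have seen $2f{+}1$ \newRound{v_{\max}} messages, at least $f{+}1$ of which were multicasts by honest nodes; the model guarantee that pre-\GST messages land by $\GST+\dissTime$ then triggers the relay rule (\Cref{alg:bracha:sendNewRound}) everywhere, and any lagging honest node $\process{j}$ collects $2f{+}1$ \newRound{v_{\max}} by $\GST+2\dissTime$ and advances via \Cref{alg:bracha:advanceRound}. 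In the ``aligned'' case every honest node calls \wishToAdvance within $\wishToAdvanceInterval$ of \GST and multicasts \newRound{v_{\max}+1}, so $2f{+}1$ such messages arrive at everyone within one extra $\dissTime$. What this buys you is an explicit post-\GST time bound of roughly $\GST+\max(2\dissTime,\ \wishToAdvanceInterval+\dissTime)$ for some honest node to advance, plus a cleaner separation between what happens before and after \GST. What it gives up is the paper's implicit stronger statement that the entire honest population keeps progressing through successive views, which the paper then reuses inductively in the lemma that follows. Two minor points worth tightening: the assumption $\wishToAdvanceInterval \ge 2\dissTime$ is not actually load-bearing in your Case~2 (only the fact that \wishToAdvance is invoked at bounded intervals is), and you could state explicitly that the two cases are exhaustive because $v_{\max}$ is the maximum over honest nodes, so either all honest nodes sit at $v_{\max}$ or some honest node is strictly below it.
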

\begin{proof}
	All honest nodes begin their local execution at view~$0$, potentially at different times.
	Based on the protocol eventually at least $f+1$ nodes (some of them might be Byzantine) send \wish{1}.
	This is because \wishToAdvance is called every~\wishToAdvanceInterval.
	Thus, eventually all honest nodes will reach view~$1$, and from~\Cref{claim:brachaAgreement} the difference between their entry is at most~$2\dissTime$ after~\GST.
	
	The above argument can be applied inductively.
	Suppose at time~$t$ node~\process{i} is at view~$v$.
	We again know that by~$t+2\dissTime$ all other honest nodes are also at view~$v$, and once $f+1$ \wish{v+1} are sent all honest nodes will eventually enter view~$v+1$, and we are done.	
\end{proof}

\begin{lemma}
	The broadcast-based protocol achieves view synchronization~(\Cref{prop:roundSync}).
\end{lemma}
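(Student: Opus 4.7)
The plan is to combine Claim~\ref{claim:brachaAgreement} (bounded spread) and Claim~\ref{claim:brachaLiveness} (liveness) with an appropriate choice of $\wishToAdvanceInterval$. Given a target duration $c \ge 0$, I would set $\wishToAdvanceInterval = 2\dissTime + c$ and proceed in three steps.

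First, I would apply Claim~\ref{claim:brachaLiveness} inductively to obtain an infinite strictly increasing sequence of views $v_1 < v_2 < \cdots$, each entered by some honest node after $\GST$. Let $t_k \triangleq \min_{\process{} \in \honestNodes} \tStart{\process{}, v_k}$. By Claim~\ref{claim:brachaAgreement}, every honest node has entered $v_k$ by time $t_k + 2\dissTime$, so $[t_k + 2\dissTime, \infty) \cap \tExecute{\process{}, v_k}$ begins at its left endpoint for every honest $\process{}$.

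Second, I would show that no honest node leaves $v_k$ before time $t_k + \wishToAdvanceInterval$. Since each honest node invokes \wishToAdvance at intervals of $\wishToAdvanceInterval$, the earliest such node can emit $\wish{v_k+1}$ after entering $v_k$ is $\wishToAdvanceInterval$ time units later. The $f+1$-sender threshold at \Cref{alg:bracha:sendNewRound}, needed before any honest node will re-broadcast $\wish{v_k+1}$, cannot be met by the at most $f$ Byzantine nodes alone; therefore the $2f+1$ threshold at \Cref{alg:bracha:advanceRound} that moves an honest node past $v_k$ cannot be satisfied before $t_k + \wishToAdvanceInterval$. Consequently $\mathcal{I}_k \triangleq [t_k + 2\dissTime, \; t_k + \wishToAdvanceInterval]$ satisfies $|\mathcal{I}_k| = c$ and $\mathcal{I}_k \subseteq \tExecute{\process{}, v_k}$ for every $\process{} \in \honestNodes$.

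Third, for the honest-leader requirement, the mapping $\leader{\cdot}$ rotates through all nodes, so every window of $f+1$ consecutive view numbers contains at least one view with an honest leader. Since $\{v_k\}$ is infinite, infinitely many $v_k$ satisfy $\leader{v_k} \in \honestNodes$; restricting $\{(\mathcal{I}_k, v_k)\}$ to this subsequence yields the infinite family required by \Cref{prop:roundSync}. The main obstacle is the second step: I must carefully rule out that the adversary can fabricate a quorum of \wish messages that evicts an honest node from $v_k$ ahead of schedule, which relies precisely on the $f+1$ gating at \Cref{alg:bracha:sendNewRound} together with the bound of $f$ Byzantine nodes, combined with the \wishToAdvance cadence.
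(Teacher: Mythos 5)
Your proposal is correct and mirrors the paper's argument: \Cref{claim:brachaLiveness} applied inductively yields an infinite sequence of synchronized views, \Cref{claim:brachaAgreement} bounds the spread to $2\dissTime$, and the $f{+}1$ / $2f{+}1$ gating at \Cref{alg:bracha:sendNewRound,alg:bracha:advanceRound} prevents Byzantine-only eviction, leaving a window of length $\wishToAdvanceInterval - 2\dissTime = c$ in every synchronized view. One small caution: the assertion that every $f{+}1$ consecutive views contain an honest leader is an extra assumption the paper makes only for \sys, not in the general model of \Cref{sec:model}; your conclusion survives regardless, because the broadcast-based protocol synchronizes \emph{every} view after \GST and the rotating mapping $\bigcup_{i=j}^{\infty} \leader{i} = \Pi$ already assigns honest leaders to infinitely many of them.
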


\begin{proof}
	From~\Cref{claim:brachaLiveness} an honest node will eventually advance to some new view~$v$ and from~\Cref{claim:brachaAgreement} after~$2\dissTime$ all other honest nodes will join it.
	For any~$c \ge 0$, if the honest nodes call \wishToAdvance every~$\wishToAdvanceInterval = 2\dissTime + c$ then it is guaranteed that all the honest nodes will execute view~$v$ together for at least $\left| \timeInterval \right| = c$ time, since it requires $f+1$ messages to move to view~$v+1$, \ie, at least one message is sent from an honest node.
	
	This argument can be applied inductively, and each view after \GST is synchronized, thus making an infinite number of time intervals and views which all honest leaders execute at the same time.
\end{proof}

\begin{lemma}
	The broadcast-based synchronizer achieves synchronization validity~(\Cref{prop:syncValidity}).
\end{lemma}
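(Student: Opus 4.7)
The plan is to argue validity by tracing the causal chain of \wish{v'} messages back to an initial honest sender who called \wishToAdvance while in view $v'-1$. Suppose the synchronizer signals \proposeRound{v'} at some honest node $\process{}$. By \Cref{alg:bracha:advanceRound}, $\process{}$ must have received $2f+1$ \newRound{v'} messages. Since at most $f$ senders are Byzantine, at least $f+1$ of these messages were sent by honest nodes; in particular, the set of honest senders of \newRound{v'} is nonempty.

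Next I would single out the \emph{first} honest node $\process{}^{\star}$ to ever send \newRound{v'}. There are only two code paths in \Cref{alg:bracha} that produce a \newRound{v'} multicast at an honest node: \Cref{alg:bracha:timeout} (triggered by \wishToAdvance) and \Cref{alg:bracha:sendNewRound} (amplification after receiving $f+1$ copies of \newRound{v'}). I would rule out the amplification path for $\process{}^{\star}$: amplification requires having already observed $f+1$ \newRound{v'} messages, but since at most $f$ of these could originate from Byzantine nodes, at least one would have been sent by an honest node strictly earlier, contradicting that $\process{}^{\star}$ is the first honest sender.

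Hence $\process{}^{\star}$ must have reached \Cref{alg:bracha:timeout}, which multicasts \newRound{\curr+1}. The text there fixes $\curr + 1 = v'$, so $\process{}^{\star}$ was executing view $v = v'-1$ at the moment \wishToAdvance fired, and made at least one invocation of \wishToAdvance during that view. Setting $v \triangleq v'-1$, we obtain an honest node $\process{}^{\star} \in \honestNodes$ that called \wishToAdvance at least $1 = v' - v$ times while in view $v$, which is exactly \Cref{prop:syncValidity}.

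The only delicate step is isolating a well-defined ``first'' honest sender; this is straightforward because message-sending events are discrete, so among the finitely many \newRound{v'} messages ever emitted by honest nodes one is earliest (or tied for earliest, in which case picking any of them works, since none could have causally depended on an honest predecessor). The rest of the argument is just a direct read-off of the two lines in \Cref{alg:bracha} that generate \newRound{v'} messages.
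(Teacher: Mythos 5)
Your proof is correct and takes essentially the same route as the paper: both trace the $2f{+}1$ \newRound{v'} messages back through \Cref{alg:bracha:advanceRound}, \Cref{alg:bracha:sendNewRound}, and \Cref{alg:bracha:timeout} to an honest node that called \wishToAdvance while at view $v'-1$. The one place you improve on the paper's write-up is in handling the amplification case (\Cref{alg:bracha:sendNewRound}): the paper merely asserts that an earlier honest sender must have been at view $v'-1$ and called \wishToAdvance, implicitly unrolling a recursion, whereas you ground this by selecting a first honest sender and deriving a contradiction if it went through the amplification path — a cleaner termination argument for the causal chain.
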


\begin{proof}
	In order for an honest node to advance to view~$v$ it has to receive $2f+1$ \newRound{v} messages~(\Cref{alg:bracha:advanceRound}).
	From those, at least $f+1$ originated from honest nodes.
	An honest node can send \newRound{v} on two scenarios: 
	
	(i)~\wishToAdvance was called when the node was at view~$v-1$~(\Cref{alg:bracha:timeout}) and we are done.
	
	(ii)~It received $f+1$ \newRound{v} messages~(\Cref{alg:bracha:sendNewRound}), meaning at least one honest node which already sent the message was at view $v-1$ and called \wishToAdvance{} and again we are done.
\end{proof}

This concludes the proof that the broadcast-based synchronizer is a view synchronizer for any~${\wishToAdvanceInterval \ge 2\dissTime}$.

\subsubsection{Latency and communication}
The broadcast-based algorithm synchronizes every view after \GST within $2\dissTime$.
Since the leaders of each view are allocated by the mapping \leader{\cdot}, in expectation every $\approx 3/2$ nodes have an honest leader (see the communication complexity analysis done for \sys in~\Cref{sec:cogsworth}).
Therefore, for latency, the broadcast-based synchronizer will take an expected constant time to reach view synchronization after~\GST, as we have proved, and also the same between every two consecutive occurrences of view synchronization.
Thus, the latency of this protocol is expected~$O(\dissTime)$. 
In the worst-case of $t$ consecutive failures, the latency is~$O(t{\cdot}\dissTime)$.

For communication costs, the protocol requires that every node sends one \wish{v} message to all the other nodes, and since the latency is expected constant, the overall communication costs are also expected quadratic, \ie,~$O(n^2)$.
In the worst-case of~$t$ consecutive failures, the communication complexity is~$O(t{\cdot}n^2)$.

\subsection{Discussion}
\label{sec:algs:discussion}
The three presented synchronizers in the paper have tradeoffs in their latency and communication costs, which are summarized in~\Cref{tab:syncComparison}.
Hence, a protocol designer may choose a synchronizer based on its needs and constraints.
It may be possible to create combinations of the three protocols and achieve
hybrid characteristics; we leave such variations for future work.

In addition, there are differences in the constraints on the parameter~\wishToAdvanceInterval
in these protocols, which is the time interval between two successive calls to~\wishToAdvance~(see~\Cref{prop:roundSync}).
The view doubling synchronizer prescribes a precise~\wishToAdvanceInterval,
which results in each view duration to be exactly twice as its predecessor. 
In the other two synchronizers there is only a lower bound on~\wishToAdvanceInterval: in the
broadcast-based it is~$2\dissTime$, and in \sys it is~$4\dissTime$. 

This difference is significant.
Suppose an upper-layer protocol utilizing the synchronizer wishes to spend an unbounded amount of time in each view as long as progress is made, and triggers a view-change upon detecting that progress is lost.
While the broadcast-based and \sys algorithms allow this upper-layer behavior, the view doubling technique does not, and thus may influence the decision on which view synchronization algorithm to choose.

\end{document}